\documentclass[sn-mathphys-num]{sn-jnl}

\usepackage{xcolor}
\definecolor{darkblue}{rgb}{0.,0.,0.75}
\definecolor{darkred}{rgb}{0.5,0.,0.}
\definecolor{darkgreen}{rgb}{0.0,0.5,0.}
\usepackage{amsmath, amssymb,mathtools,amsthm,amsfonts,amssymb,braket,comment,mathtools,bbm, mathrsfs}
\usepackage[all]{xy}
\usepackage{fullpage}
\usepackage{chemarrow}
\usepackage{marginnote}
\usepackage[numbers]{natbib}
\usepackage{tikz-cd}
\usepackage{scalerel}
\newtheorem{lemma}{Lemma}[section]
\newtheorem{theorem}[lemma]{Theorem}

\newtheorem{corollary}[lemma]{Corollary}

\newtheorem{proposition}[lemma]{Proposition}
\newtheorem{proposition-definition}[lemma]{Proposition-Definition}
\theoremstyle{definition}
\newtheorem{definition}[lemma]{Definition}
\newtheorem{remark}[lemma]{Remark}

\usepackage{enumitem}
\usepackage[normalem]{ulem}

\usepackage{appendix}
\numberwithin{equation}{section}
\theoremstyle{definition}
\newcommand{\norm}[1]{\left\|#1\right\|}
\makeatletter %% Spacing before and after Thm environment
\def\thm@space@setup{%
  \thm@preskip=6pt
  \thm@postskip=6pt
}
\makeatother

\usepackage{etoolbox}

\makeatletter
\patchcmd{\@maketitle}
  {{\ifnum\aucount>1*\fi}%
   Corresponding author(s). E-mail(s): \corrauthemail\par}
  {\corrauthemail\par}
  {}{\typeout{Warning: email-label patch failed}}
\makeatother

\def\CA{{\mathcal A}}
\def\CB{{\mathcal B}}

\def\CD {{\mathcal D}}

\def\CH {{\mathcal H}}

\def\CB {{\mathcal B}}
\def\CD {{\mathcal D}}

\def\CT {{\mathcal T}}

\def\IN{\mathbb{N}}

\def\IC{\mathbb{C}}

\def\IR{{\mathbb{R}}}
\def\IT{{\mathbb{T}}}
\def\IZ{{\mathbb{Z}}}

\def\Tr{{\operatorname{Tr}}}

\def\bZ{\mathbb{Z}}

\newcommand{\supp}{\operatorname{supp}}
\newcommand{\abs}[1]{\left\lvert #1\right\rvert}

\def\a{\alpha}

\def\La{{\Lambda}}
\def\l{{\lambda}}

\def\sc{\scaleto{\,\circ\,}{4pt}}
\def\sl{\mathcal{S}_{\textsuperscript{L}}}

\def\Det{\mbox{Det}}

\newcommand{\id}{\mathbbm{1}}

\begin{document}

\title[Article Title]{Thermalization with Gaussian Quantum Cellular Automata}
%thermalization of QCA on continuous variables

\author{\fnm{Roman} \sur{Geiko}}\email{geiko@physics.ucla.edu}

\author{\fnm{Jake} \sur{Gerenraich}}\email{gerenraich@physics.ucla.edu}

\affil{\orgdiv{Department of Physics and Astronomy}, \orgname{University of California}, \orgaddress{\city{Los Angeles}, \postcode{90095}, \state{CA}, \country{USA}}}

\abstract{We study the long-time dynamics of many-body bosonic lattice systems under
translation-invariant Gaussian quantum cellular automata. We formulate two sets of conditions on GQCAs which separately guarantee  
thermalization of any state on the local Weyl algebra to the infinite temperature state, whenever the state is locally normal and has uniformly bounded particle density. Our main intermediate result is a quantum many-body generalization of the classic Riemann-Lebesgue lemma which is a bound on expectation values of local Weyl operators involving their support and the state's particle density.}

\keywords{Quantum lattice systems, quantum cellular automata, quantum dynamics, thermalization}

\maketitle

\section{Introduction and main results}

The emergence of thermodynamic behavior from unitary evolution of quantum systems is central to justifying the formalism of equilibrium statistical mechanics from first principles. Despite decades of progress in the foundations of quantum statistical mechanics, a robust understanding of thermalization, or ``approach to equilibrium'', has remained elusive. One path forward is to identify broad, physically motivated classes of dynamics for which thermalization can be rigorously established. Here we examine quantum lattice systems of infinite-dimensional degrees of freedom in infinite volume and formulate two sets of sufficient conditions on the dynamics to thermalize a certain class of initial states.   

We consider discrete time dynamics which can be seen as the Floquet, or stroboscopic, picture of a continuous time evolution. In particular, we suppose the dynamics have a uniform bound on the spread of the support of local operators, known as  Quantum Cellular Automata (QCAs) \cite{schumacher2004reversiblequantumcellularautomata}. Thermalization with QCAs was studied in e.g. \cite{Kruger_GQCA,G_tschow_2010,kapustin2026chaosthermalizationcliffordfloquetdynamics,kapustin2026thermalizationmanybodyclassicalfloquet}. The authors of \cite{G_tschow_2010} demonstrated that fractal-type Clifford QCAs thermalize translation-invariant product states, as well as Pauli stabilizer states, to the tracial state. More recently, these results were generalized in \cite{kapustin2026chaosthermalizationcliffordfloquetdynamics}, and for their classical analogs in \cite{kapustin2026thermalizationmanybodyclassicalfloquet}, to arbitrary spatial dimension, number of qubits per site, and a much broader class of initial states. Here, we continue this research program by analyzing thermalization in quantum many-body systems with non-compact, continuous degrees of freedom under the action of Gaussian QCAs.

 The local probes considered here are local Weyl operators, while the dynamics is given by translation-invariant Gaussian QCAs.  Such automata have two useful features.  First, they preserve locality: the support of a strictly local observable can grow only at finite speed.  Second, they preserve the Weyl relations, so their action on Weyl operators is governed by a linear symplectic map on phase-space labels.  The long-time problem for local Weyl observables can therefore be reduced to the spectral dynamics of a finite-spread symplectic convolution operator.

For clarity we first describe the setting informally.  The lattice is $\La=\IZ^d$, with one bosonic mode at each site.  A local Weyl operator is denoted by $W(\xi)$, where the label $\xi\in V=c_{00}(\La,\IR^2)$ is a finitely supported phase-space vector.  A Gaussian QCA acts on the Weyl operators by
\begin{align}
        \alpha(W(\xi))=e^{i\chi(\xi)}W(\Phi\xi),
\end{align}
where $\Phi:V\to V$ is a translation-invariant finite-spread symplectic map.  After Fourier transform, $\Phi$ is represented by a matrix-valued Laurent-polynomial symbol $A(\theta)$ over the Brillouin torus.  Thus the long-time behavior of local Weyl expectations is controlled by two competing effects: the phase-space norm of $\Phi^k\xi$, and the size of the support of $\Phi^k\xi$.

The local Weyl unitaries generate the $*$-algebra
$\CA_{\mathrm{loc}}$ defined in Section~\ref{section 3}. A state on it is a
normalized positive linear functional. The  initial states considered in this are locally normal with respect to
the finite-volume Schr\"odinger representations and have finite particle density.  The latter condition means that there exists $\nu<\infty$ such that, for
every finite region $\Gamma\subset\IZ^d$,
\begin{align}\label{eq:finite_particle_density}
    \omega(N_\Gamma)\leqslant \nu |\Gamma|,
    \qquad
    N_\Gamma=\sum_{x\in\Gamma}a_x^*a_x.
\end{align}
Here $\omega$ is a state while $a_x^*$, $a_x$, and $N_\Gamma$ are the unbounded creation,
annihilation, and number operators in finite-volume Schr\"odinger
representation. We make sense of the the left-hand side of \eqref{eq:finite_particle_density} using Proposition-Definition \ref{proposition_definition_unbounded}.   

By thermalization we mean convergence of the initial state on $\CA_{\mathrm{loc}}$ to the infinite-temperature Weyl state $\omega_{\infty}$.  This state
is characterized by the expectation values: 
\begin{align}
    \omega_\infty(1)=1,\qquad \omega_\infty(W)=0
\end{align}
where $W$ is a local Weyl unitary different from identity. This state is the analogue of the tracial state in spin systems where the algebra of local observables is generated by the Pauli unitaries. 

We prove two versions of the result.  The first uses a global hyperbolicity condition: the spectrum of $A(\theta)$ is uniformly separated from the unit circle for all $\theta\in\IT^d$.  This gives a stable/unstable splitting $P=P_-\oplus P_+$ of the Hilbert phase space $P=\ell^2(\La;\IR^2)$.  We call the GQCA \emph{everyday} when no nonzero strictly local label lies entirely in the stable subspace, equivalently $V\cap P_-=\{0\}$.  Under this condition every nonzero local label has an unstable component and is exponentially expanded by forward time evolution.

\begin{theorem}\label{th:main}
    Let $\alpha$ be an everyday uniformly hyperbolic Gaussian QCA and let $\omega$ be a locally normal state on $\CA_{\mathrm{loc}}$ with finite particle density.  Then, for every nonzero strictly local Weyl label $\xi\in V$,
\begin{align}
        \lim_{k\to\infty}\omega\bigl(\alpha^k(W(\xi))\bigr)=0 .
\end{align}
Equivalently, for every finite linear combination
$A=\sum_{\xi}c_\xi W(\xi)\in\CA_{\mathrm{loc}}$,
\begin{align}
\lim_{k\to\infty}\omega\bigl(\alpha^k(A)\bigr)
    = c_0
    = \omega_\infty(A) .
\end{align}
\end{theorem}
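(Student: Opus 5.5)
The plan is to reduce the statement to the quantum Riemann--Lebesgue bound announced in the abstract, and then feed it the two dynamical facts supplied by the hyperbolicity assumption. Since $\alpha^k(W(\xi))=e^{i\chi_k(\xi)}W(\Phi^k\xi)$ with a real phase $\chi_k(\xi)$, we have $|\omega(\alpha^k(W(\xi)))|=|\omega(W(\Phi^k\xi))|$. So it suffices to show $\omega(W(\eta_k))\to0$ for $\eta_k=\Phi^k\xi$. The second formulation then follows by linearity, using $\alpha(1)=1$.

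The first step is to use the bound the paper proves as its main intermediate result. I expect it to have the shape
\begin{align}
|\omega(W(\eta))|\leqslant C\,\frac{(1+\nu)\,|\supp\eta|}{\|\eta\|^2}
\end{align}
or a similar expression that decays in the phase-space norm $\|\eta\|_{\ell^2}$ and grows at most polynomially in $|\supp\eta|$. Heuristically, one restricts to the finite region $\Gamma=\supp\eta$, where $\omega$ is given by a density matrix $\rho_\Gamma$, and writes $W(\eta)$ as a displacement operator. A single displacement by a large vector $\eta$ has small overlap on states of bounded energy: one bounds $|\Tr\rho_\Gamma W(\eta)|$ via $\|(1+N_\Gamma)^{-1/2}W(\eta)(1+N_\Gamma)^{-1/2}\|$-type estimates, or via the commutator $[W(\eta),a_x]\propto\eta_x W(\eta)$. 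The particle-density hypothesis then gives $\omega(N_\Gamma)\leqslant\nu|\Gamma|$.

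The second step is dynamical control. Finite spread $R$ of $\Phi$ gives $\supp\Phi^k\xi\subset\supp\xi+B_{kR}$, so $|\supp\eta_k|\leqslant C(1+k)^d$, which grows only polynomially. For growth of the norm, decompose $\xi=\xi_-+\xi_+$ with respect to $P=P_-\oplus P_+$. The everyday condition $V\cap P_-=\{0\}$ gives $\xi_+\neq0$. Uniform hyperbolicity means the spectrum of $A(\theta)$ avoids an annulus $\{r^{-1}<|z|<r\}$ uniformly in $\theta$. Via the Riesz projections, which are continuous, in fact analytic, in $\theta$, this gives $\|\Phi^k\xi_+\|\geqslant c\lambda^k\|\xi_+\|$ and $\|\Phi^k\xi_-\|\leqslant C\lambda^{-k}\|\xi_-\|$ for some $\lambda>1$. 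Hence $\|\eta_k\|\geqslant c\lambda^k\|\xi_+\|-C\lambda^{-k}\|\xi\|$, which grows exponentially.

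Combining the two steps, the Riemann--Lebesgue bound gives $|\omega(W(\eta_k))|\lesssim(1+\nu)k^d\lambda^{-2k}\to0$. The main obstacle is the first step: proving a bound that is uniform over locally normal states and depends only on the particle density, with no more than polynomial loss in the support size. The difficulty is that $W(\eta)$ is spread over many sites with possibly small individual components. I would first try to localize at the site $x_0$ where $|\eta_{x_0}|$ is maximal, so that $|\eta_{x_0}|^2\geqslant\|\eta\|^2/|\supp\eta|$. Writing $W(\eta)=W(\eta_{x_0})W(\eta')$ up to a phase, I would apply a single-mode estimate $|\Tr(\sigma W(\zeta))|\leqslant C(1+\Tr\sigma N_{x_0})/|\zeta|^2$-like. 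Here $\sigma$ is the operator $\rho W(\eta')$, which is not positive, so the estimate must be phrased through Cauchy--Schwarz with weights $(1+a_{x_0}^*a_{x_0})^{1/2}$. This yields the needed decay $|\supp\eta|/\|\eta\|^2$ up to constants, which suffices against the exponential growth.
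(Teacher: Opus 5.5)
Your reduction (discarding the phase, then using linearity with $\alpha(1)=1$) and the whole dynamical half coincide with the paper's proof. Everydayness forces $\Pi_+\xi\neq0$. The hyperbolic estimates give $\|\Phi^k\xi\|_2\geqslant c\lambda^k$; your triangle-inequality version is a harmless variant of the paper's use of $\|\Pi_+\|$ and $\Pi_+\Phi=\Phi\Pi_+$. Finite spread gives $|\supp\Phi^k\xi|\leqslant C(1+k)^d$.

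\textbf{The gap is in the Riemann--Lebesgue step.} Two of your claims are false when only the first moment of $N$ is controlled:
\begin{itemize}
\item the form you expect, $|\omega(W(\eta))|\lesssim(1+\nu)|\supp\eta|/\|\eta\|^2$;
\item the single-mode estimate $|\operatorname{Tr}(\sigma W(\zeta))|\lesssim(1+\operatorname{Tr}\sigma N)/|\zeta|^2$, on which your localization rests.
\end{itemize}
On one site with $\zeta=(0,v)$, $\operatorname{Tr}(\rho W(\zeta))=\int|\psi(x)|^2e^{ivx}\,dx$ is the Fourier transform of the position density. Finite $\langle N\rangle$ only gives $\psi'\in L^2$ and $x\psi\in L^2$. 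Hence $|\psi|^2\in W^{1,1}$, which yields $O(1/|v|)$ decay and nothing better.

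A concrete counterexample: take $\psi\propto e^{-x^2/2}\bigl(1+\sum_{n\geqslant2}a_n\cos(4^nx)\bigr)$ with $a_n=1/(n4^n)$. Then $\psi'\in L^2$ because $\sum_n(4^na_n)^2<\infty$. Yet $|\operatorname{Tr}(\rho W(0,4^n))|\gtrsim a_n$, which is not $O(4^{-2n})$. The product state built from $\psi$ has finite particle density and already violates your bound on one-site labels. So the step you identify as the main obstacle cannot be closed in the form you propose.

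\textbf{The repair is to aim for decay like $1/\|\eta\|$,} which still beats the polynomial support growth. The paper gets this without localizing, via a sign flip:
\begin{itemize}
\item For $s=\pi J\eta/\|\eta\|^2$ one has $\sigma(s,\eta)=\pi$, hence $W(s)^*W(\eta)W(s)=-W(\eta)$.
\item Therefore $2\omega(W(\eta))=\operatorname{Tr}\bigl[(\rho_\Gamma-W(s)\rho_\Gamma W(s)^*)W(\eta)\bigr]$.
\item Then $\|W(s)\rho_\Gamma W(s)^*-\rho_\Gamma\|_1\leqslant\|s\|\,\|[R,\rho_\Gamma]\|_1\leqslant\|s\|\,2\sqrt{2\omega(N_\Gamma)+1}$, with $R=R_\Gamma(s/\|s\|)$ and $R^2\leqslant 2N_\Gamma+1$.
\end{itemize}
This gives $|\omega(W(\eta))|\leqslant\pi\sqrt{2\nu|\Gamma|+1}/\|\eta\|_2$. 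Hence $|\omega(W(\Phi^k\xi))|\lesssim k^{d/2}\lambda^{-k}$, rather than your $k^d\lambda^{-2k}$.

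Your localization idea does work once the rate is corrected. Flip only at the site $x_0$ of maximal $|\eta_{x_0}|$: take $s=\pi J\eta_{x_0}/|\eta_{x_0}|^2$ placed at $x_0$, which still satisfies $\sigma(s,\eta)=\pi$. This gives
\[
|\omega(W(\eta))|\leqslant\pi\sqrt{2\nu+1}/|\eta_{x_0}|\leqslant\pi\sqrt{(2\nu+1)|\supp\eta|}/\|\eta\|_2 .
\]
That is the same scaling as the paper. It uses only the one-site density bound, and needs no weighted Cauchy--Schwarz on the non-positive $\rho W(\eta')$.

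The commutator identity you mention also works directly. In the form $[R_\Gamma(z),W(\eta)]=\sigma(z,\eta)W(\eta)$ with $z=J\eta/\|\eta\|$, it gives $\|\eta\|\,|\omega(W(\eta))|\leqslant2\sqrt{2\omega(N_\Gamma)+1}$.
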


The second version is useful when hyperbolicity is visible only on an open part of the Brillouin torus $\IT^d$.  We call $\Phi$ locally hyperbolic if  $A(\theta_0)$ is hyperbolic at some $\theta_0\in \IT^d$.  We call $\Phi$ \emph{regular} if no nonzero local label is mapped by $\Phi$ to a scalar multiple of a lattice translate of itself\footnote{In particular, this condition implies the absence of gliders in the language of \cite{G_tschow_2010}.}.  Regularity rules out local eigenlabels that could avoid the expanding directions.  In the one-mode uniformly hyperbolic case, regularity is equivalent to everydayness; see Proposition~\ref{prop:fractality_regularity}.

\begin{theorem}\label{th:main_local}
    Let $\alpha$ be a regular locally hyperbolic Gaussian QCA and let $\omega$ be a locally normal state on $\CA_{\mathrm{loc}}$ with finite particle density.  Then, for every nonzero strictly local Weyl label $\xi\in V$,
\begin{align}
        \lim_{k\to\infty}\omega\bigl(\alpha^k(W(\xi))\bigr)=0 .
\end{align}
Equivalently, for every finite linear combination
$A=\sum_{\xi}c_\xi W(\xi)\in\CA_{\mathrm{loc}}$,
\begin{align}
\lim_{k\to\infty}\omega\bigl(\alpha^k(A)\bigr)
    = c_0
    = \omega_\infty(A) .
\end{align}
\end{theorem}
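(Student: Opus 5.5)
The plan is to combine the many-body Riemann--Lebesgue bound announced in the abstract with a growth estimate for $\Phi^k\xi$ in the locally hyperbolic setting. That bound controls $|\omega(W(\eta))|$ by the phase-space norm of $\eta$, the size of its support, and the particle density $\nu$. Since $\alpha^k(W(\xi))=e^{i\chi_k}W(\Phi^k\xi)$, it suffices to show $|\omega(W(\Phi^k\xi))|\to 0$. We expect the bound to have the form
\begin{align*}
|\omega(W(\eta))|\lesssim \frac{C(\nu)\,|\supp\eta|}{\|\eta\|_{\ell^2}^{2}}
\end{align*}
(or a similar ratio in which the norm beats the support size). By finite spread, $|\supp\Phi^k\xi|\le C(|\supp\xi|+Rk)^d$ grows only polynomially. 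It therefore suffices to prove that $\|\Phi^k\xi\|_{\ell^2}$ grows exponentially. In fact $\|\Phi^k\xi\|^2\ge c\,e^{\lambda k}$ for some $\lambda>0$ along the full sequence suffices. The equivalence with the linear-combination statement is immediate by linearity, since $W(0)=1$.

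For the growth, pass to Fourier space. There $\|\Phi^k\xi\|^2=\int_{\IT^d}|A(\theta)^k\hat\xi(\theta)|^2\,d\theta$, and $\hat\xi$ is a trigonometric polynomial vector. Hyperbolicity is an open condition, so $A(\theta)$ is hyperbolic on a neighbourhood $U\ni\theta_0$. Shrinking $U$, we get a continuous spectral splitting $\IC^2=E_-(\theta)\oplus E_+(\theta)$ and uniform constants with $|A(\theta)^k v|\ge c\mu^k|v|$ for $v\in E_+(\theta)$, where $\mu>1$. Write $Q_+(\theta)$ for the continuous (real-analytic) spectral projection onto $E_+(\theta)$. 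If $Q_+(\theta)\hat\xi(\theta)\ne 0$ at some $\theta\in U$, then it is nonzero on a small ball $B\subset U$. Hence $|A^k\hat\xi|\gtrsim\mu^k|Q_+\hat\xi|-C\mu'^k|Q_-\hat\xi|$ with $\mu'<1$ on $B$. Integrating over $B$ gives exponential growth of the $\ell^2$ norm.

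The key step, and the main obstacle, is ruling out the case $Q_+(\theta)\hat\xi(\theta)\equiv 0$ on $U$. There regularity must be used. Suppose $\hat\xi(\theta)\in E_-(\theta)$ for all $\theta\in U$. Everything is real-analytic: $A$ is a Laurent polynomial and $Q_\pm$ is analytic where defined. So this identity propagates along the connected set on which the splitting continues analytically. The idea is then to show that $\hat\xi(\theta)$ is an eigenvector of $A(\theta)$ on $U$. For one mode, $E_-(\theta)$ is one-dimensional: the symplectic $2\times 2$ matrix has eigenvalues $\lambda,\lambda^{-1}$. So $A(\theta)\hat\xi(\theta)=\lambda_-(\theta)\hat\xi(\theta)$, and $\lambda_-$ is the ratio of the polynomial vectors $A\hat\xi$ and $\hat\xi$. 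We will then argue algebraically: a rational function $\lambda_-$ with $\lambda_-\hat\xi=A\hat\xi$ Laurent polynomial, whose determinant relations force $\lambda_-\lambda_+=1$ with both rational, must be a unit monomial $c\,e^{i n\cdot\theta}$. This is where we expect the real difficulty, via a unique-factorization argument in the Laurent ring using $\det A=$ monomial and $\operatorname{tr}A$ a polynomial. The conclusion $A\hat\xi=c\,e^{in\cdot\theta}\hat\xi$ identically says $\Phi\xi=c\,T_n\xi$, contradicting regularity. For several modes we would first try the analogous argument on the $\Phi$-invariant Laurent submodule generated by $\hat\xi$, whose stable part must then be a finite-rank invariant module. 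If this fails, we would fall back on proving the growth along a subsequence and combining it with the sub-multiplicative structure of the support bound.

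Once exponential growth is established, inserting it together with the polynomial support bound into the Riemann--Lebesgue estimate gives $\omega(\alpha^k(W(\xi)))\to0$.
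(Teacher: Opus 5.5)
Your proposal is correct and follows the paper's proof essentially step by step. The paper combines Theorem \ref{thm:rl-intro}, whose actual form is $|\omega(W(\eta))|\leqslant \pi\sqrt{2\nu|\supp\eta|+1}/\|\eta\|_2$, with polynomial support growth and Lemma \ref{lemma:_loc_hyp_fractality}. That lemma gets exponential growth exactly as you do: it rules out $\Pi_+^U\widehat\xi\equiv 0$ on $U$ using the one-dimensionality of the stable fiber, the vanishing of the Laurent polynomial $\det(A\widehat\xi,\widehat\xi)$ on an open set, and the UFD argument that makes the eigenvalue a monomial unit $cu^x$.

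Two small corrections. The fiber eigenvalues satisfy $\lambda_-\lambda_+=\det A(u)$, which is a monomial unit of modulus one on $\IT^d$ rather than $1$; this still gives one stable and one unstable direction and the same unit conclusion. Also, no analytic continuation of the splitting is needed, since a Laurent-polynomial identity that holds on an open subset of $\IT^d$ holds identically.
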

\begin{figure}[htbp]
 \begin{center}
\includegraphics[scale = 0.15]{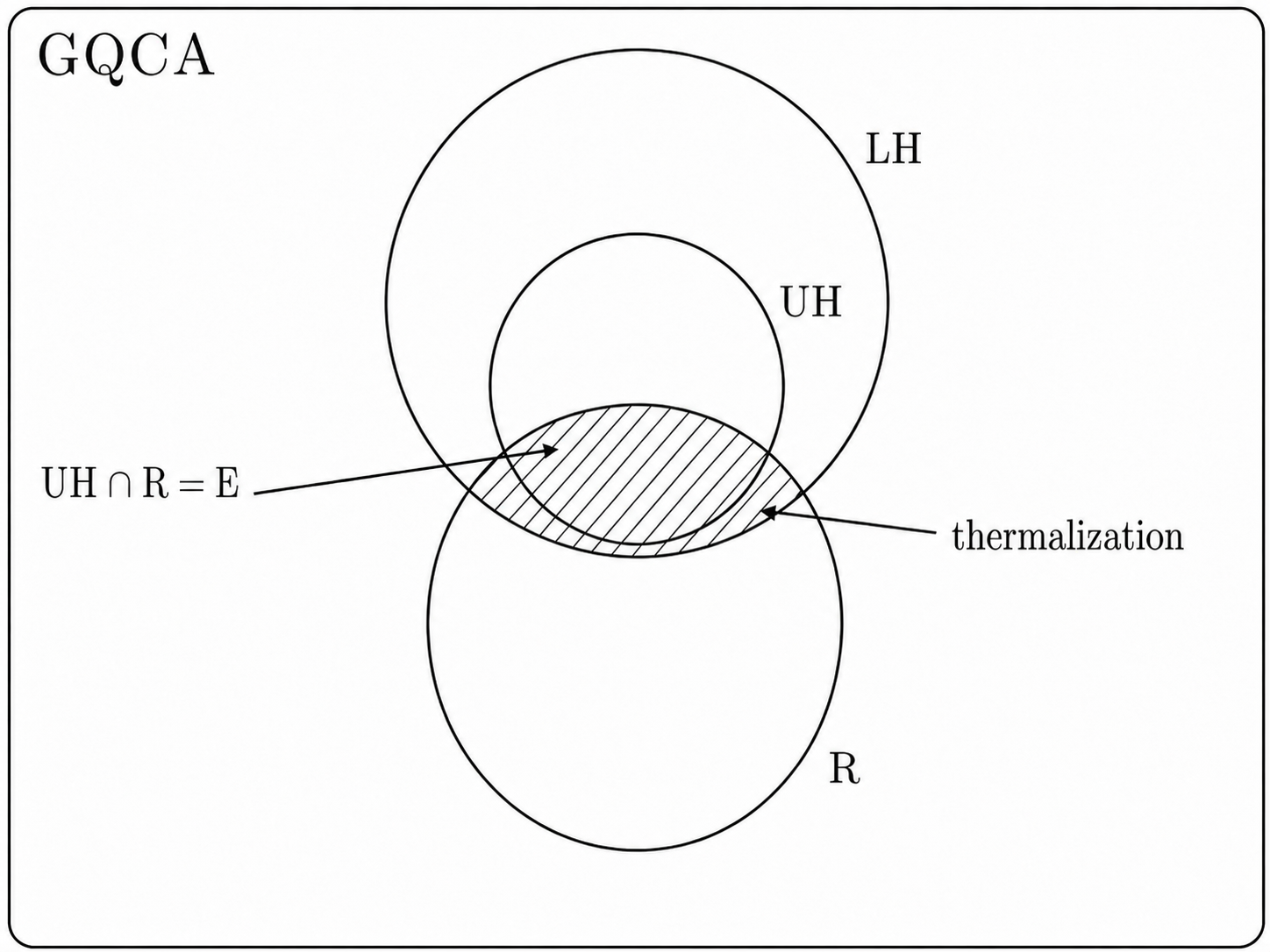}\caption{ $\operatorname{LH}$, $\operatorname{UH}$, $\operatorname{R}$, and $\operatorname{E}$ stand for the local and uniform hyperbolicity, regularity and everydayness, respectively. The shaded region $\operatorname{LH}\cap \operatorname{R}$ contains the GQCAs for which we prove thermalization.}
\label{Fig:diagram}
\end{center}
\end{figure}
We summarize the conditions on thermalizing GQCAs in Fig \ref{Fig:diagram}. 
\begin{remark}[Scope of the theorems]\label{rem:scope-of-theorem}
The conclusion is deliberately formulated at the level of local Weyl observables, equivalently finite linear combinations of local Weyl operators.  Within that class the assumptions on the initial state are weak: the state need not be translation invariant, quasi-free, or Gibbs.  The proof uses no special normal form for the automaton beyond finite spread, translation invariance, and the stated hyperbolicity hypotheses.

The limitation is equally important.  The theorem is not a statement about arbitrary bounded operators in \(\CB(\CH_\Gamma)\), nor about arbitrary unbounded local observables.  The unbounded number operators enter only through the moment condition defining finite particle density.
\end{remark}

\begin{remark}[Multiple bosonic modes per unit cell]\label{rem:Multiple_modes}
Our results are formulated for a single bosonic mode per unit cell. Instead, we could have set the on-site Hilbert space as $\CH=L^2(\IR^r)$. In that case, the setup and the proof of Theorem \ref{th:main} holds through. 
\end{remark}

The analytic input behind both theorems is a many-body version of the quantum Riemann--Lebesgue lemma of Werner \cite{10.1063/1.526310}.  In the present setting, the role of Fourier coefficients is played by expectations of Weyl operators.

\begin{theorem}[Many-body quantum Riemann--Lebesgue estimate]\label{thm:rl-intro}
Let $\omega$ be a locally normal state satisfying  \eqref{eq:finite_particle_density}.  If $0\ne\xi\in V$ and $\Gamma=\supp\xi$, then
\begin{equation}\label{eq:intro-rl}
        \abs{\omega(W(\xi))}
        \leqslant
        \frac{\pi\sqrt{2\nu\abs{\Gamma}+1}}{\norm{\xi}_2} .
\end{equation}
\end{theorem}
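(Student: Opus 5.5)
Since $\omega$ is locally normal, its restriction to the finite region $\Gamma=\supp\xi$ is given by a density matrix $\rho_\Gamma$ on $\CH_\Gamma=L^2(\IR^{\Gamma})$. This gives
\begin{align*}
\omega(W(\xi))=\Tr(\rho_\Gamma W(\xi)).
\end{align*}
The finite particle density assumption, interpreted via Proposition-Definition~\ref{proposition_definition_unbounded}, gives $\Tr(\rho_\Gamma N_\Gamma)\leqslant\nu|\Gamma|$. The estimate \eqref{eq:intro-rl} is then a statement purely about a single $n=|\Gamma|$-mode system. The plan is to reduce it to a one-dimensional Riemann--Lebesgue estimate in the direction of $\xi$.

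\textbf{Step 1: reduction to one mode.} Write $\xi=\norm{\xi}_2\,\hat\xi$ with $\hat\xi$ a unit vector in $\IR^{2n}$. Then $W(\xi)=e^{itR}$ with $t=\norm{\xi}_2$ (up to the normalization of the Weyl relations). Here $R=R(\hat\xi)=\sum_x(\hat\xi_x^{(1)}p_x-\hat\xi_x^{(2)}q_x)$ is a single canonical quadrature. Pick a symplectic, in fact orthogonal-symplectic, change of modes such that $R$ becomes the momentum (or position) operator $p_1$ of a new first mode; this is possible because $O(2n)\cap Sp(2n)$ acts transitively on unit vectors. Because the transformation is passive, it commutes with $N_\Gamma$. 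Hence the transformed state $\rho'$ still satisfies $\Tr(\rho' N)\leqslant\nu n$. In particular its reduced one-mode state $\sigma$ satisfies
\begin{align*}
\Tr\bigl(\sigma(q^2+p^2)\bigr)=2\Tr(\sigma a^*a)+1\leqslant 2\nu n+1.
\end{align*}

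\textbf{Step 2: one-mode estimate.} We now have
\begin{align*}
\omega(W(\xi))=\Tr(\sigma e^{itp})=\int e^{itk}\,\mu(dk),
\end{align*}
where $\mu$ is the momentum distribution of $\sigma$. Its density is $f(k)=\sum_j\lambda_j|\hat\psi_j(k)|^2$, written in the eigenbasis of $\sigma$, with the bound $\int k^2 f\leqslant 2\nu n+1$ available. The decay of the Fourier transform of $f$ in $t$ is controlled by the regularity of $f$. Integrating by parts once gives
\begin{align*}
\Bigl|\int e^{itk}f(k)\,dk\Bigr|\leqslant \frac{1}{t}\int|f'(k)|\,dk.
\end{align*}
For each $\psi_j$ we have $\frac{d}{dk}|\hat\psi_j|^2=2\operatorname{Re}\bigl(\overline{\hat\psi_j}\,\hat\psi_j'\bigr)$, and $\hat\psi_j'$ corresponds to $q\psi_j$. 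Cauchy--Schwarz therefore gives
\begin{align*}
\int\bigl|(|\hat\psi_j|^2)'\bigr|\leqslant 2\norm{q\psi_j}.
\end{align*}
Summing over $j$ and applying Jensen/Cauchy--Schwarz in $\lambda_j$ yields
\begin{align*}
\int|f'|\leqslant 2\sqrt{\Tr(\sigma q^2)}\leqslant 2\sqrt{2\nu n+1}.
\end{align*}
This is the shape of \eqref{eq:intro-rl}, i.e.\ $C\sqrt{2\nu|\Gamma|+1}/\norm{\xi}_2$. The constant $\pi$ should come out of the Weyl normalization: with $W(\xi)=e^{i\sqrt{2}\pi\cdots}$-type or $2\pi$ conventions, $t$ is a constant multiple of $\norm{\xi}_2$, and I would fix the constant by tracking the convention from Section~\ref{section 3}. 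A sharper route is Werner's argument, which bounds $\int|f'|$ by half the total variation with an optimal constant.

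\textbf{Main obstacle.} The routine parts are local normality and the passive rotation. The delicate point is justifying the integration by parts and the identity $\int|f'|\leqslant2\sqrt{\Tr\sigma q^2}$ when $\psi_j$ lie only in the form domain of $N$; I would handle this by a core/approximation argument.

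The other point to check is that only $\Tr(\sigma q^2)$, the conjugate quadrature, is needed. It is indeed bounded by $\Tr\sigma(q^2+p^2)$, so the particle-number bound suffices. Finally, the constant must be matched exactly to the paper's Weyl convention so that it comes out as $\pi$.
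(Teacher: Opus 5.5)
Your proof is correct and takes a different route from the paper's. With the paper's conventions it even gives a sharper constant. Since $W_\Gamma(\xi)=e^{iR_\Gamma(\xi)}$ with $R_\Gamma$ linear in $\xi$, your $t$ is exactly $\norm{\xi}_2$, and your integration by parts yields $\abs{\omega(W(\xi))}\leqslant 2\sqrt{2\nu\abs{\Gamma}+1}/\norm{\xi}_2$. The statement then follows because $2<\pi$. There is nothing to ``match'': the $\pi$ is not a Weyl-normalization effect but an artifact of the paper's method.

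The paper never passes to a one-mode marginal or a momentum density. It proceeds as follows:
\begin{enumerate}
\item It uses the sign flip $W(s_\xi)^*W(\xi)W(s_\xi)=-W(\xi)$, with $\norm{s_\xi}=\pi/\norm{\xi}$, to get $\abs{\omega(W(\xi))}\leqslant\frac12\norm{\rho_\Gamma-W(s_\xi)\rho_\Gamma W(s_\xi)^*}_1$.
\item A Duhamel argument bounds this by $\frac12\norm{s_\xi}\,\norm{[R_\Gamma(\eta),\rho_\Gamma]}_1$, where $\eta=s_\xi/\norm{s_\xi}$.
\item It then uses $\norm{[R,\rho]}_1\leqslant 2\sqrt{\omega(R^2)}\leqslant 2\sqrt{2\omega(N_\Gamma)+1}$. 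The last step is a quadratic-form inequality for the composite mode $b=\sum_x\gamma_x a_x$, not a unitary change of modes.
\end{enumerate}
In your classical picture, the paper's argument reads $\abs{\widehat f(t)}=\frac12\bigl|\int e^{itk}\bigl(f(k)-f(k-\pi/t)\bigr)\,dk\bigr|\leqslant\frac{\pi}{2t}\norm{f'}_1$. You integrate by parts directly and get $\frac1t\norm{f'}_1$, which accounts exactly for the factor $\pi/2$.

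Both arguments rest on the same input: the second moment of the quadrature conjugate to $R_\Gamma(\hat\xi)$ is at most $2\omega(N_\Gamma)+1$. In your proof that quadrature is $q$; in the paper's it is $R_\Gamma(\eta)$. The paper's operator-level version avoids the metaplectic implementation of your passive rotation, the partial trace, and Sobolev bookkeeping. It also packages the estimate through the continuity modulus $\Omega_{\rho_\Gamma}$, in Werner's framework. Your route is more elementary and gives the better constant.

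Your ``main obstacle'' needs no core approximation. Finiteness of $\omega_\sigma(q^2)$ already puts each $\psi_j$ with $\lambda_j>0$ in $D(q)$, so $\widehat{\psi_j}\in H^1(\IR)$. Hence $\abs{\widehat{\psi_j}}^2\in W^{1,1}(\IR)$ by the product rule. The series defining $f$ converges in $W^{1,1}$, because $\sum_j\lambda_j\norm{q\psi_j}\leqslant\sqrt{\omega_\sigma(q^2)}$. Therefore $f$ vanishes at infinity and the boundary terms drop out.
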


The mechanism of the proof is then simple.  Let $\xi_k=\Phi^k\xi$.  The dynamical hypotheses imply $\norm{\xi_k}_2\geqslant c_\xi e^{\gamma k}$ for every nonzero local label $\xi$.  On the other hand, finite spread implies $\abs{\supp\xi_k}\leqslant C_\xi(1+k)^d$.  Substituting these two bounds into \eqref{eq:intro-rl} gives
\begin{align}
    \abs{\omega(W(\Phi^k\xi))}
    \leqslant
    \frac{\pi\sqrt{2\nu C_\xi(1+k)^d+1}}{c_\xi e^{\gamma k}}
    \longrightarrow 0 .
\end{align}
Thus exponential growth in phase-space frequency dominates the polynomial growth of spatial support, and local Weyl observables lose all memory of the initial state.

The paper is organized as follows.  In Section~\ref{section 3} we define the algebraic local Weyl algebra, locally normal states, Gaussian QCAs, and the finite-particle-density condition.  In Section~\ref{section 4} we prove the many-body quantum Riemann--Lebesgue estimate.  In Section~\ref{sec:hyperbolicity_and_regularity} we introduce local and uniform hyperbolicity, everydayness, and regularity, and we prove the algebraic criteria used to check them.  In Section~\ref{sec:thermalization} we combine the dynamical growth estimates with the Riemann--Lebesgue estimate to prove the two thermalization theorems.  Finally, Section~\ref{sec:example} gives an explicit one-dimensional family of examples.

\section{Many-body systems of infinite-dimensional degrees of freedom}\label{section 3}

In this work we analyze many-body quantum systems of infinite-dimensional degrees of freedom arranged in cubic lattices. We consider a lattice $\La=\IZ^d$ and the one-site Hilbert space  $\CH=L^2(\IR)$\footnote{More generally, one can consider $L^2(\IR^r)$ as the on-site Hilbert space. Some of our results generalize to this case straightforwardly.}. Our results concern states and dynamics in infinite volume and for that reason we use the framework of $^*$-algebras of local observables, and view normalized positive linear functionals on them as physical states.

In this work we consider bounded operators, denoted by $\CB=\CB(\CH)$, unbounded operators, and Weyl operators acting on $\CH$, with the latter being the local probes of our interest. We define the on-site ``phase space'' $E_0=\IR^2$ with the standard symplectic form \begin{align}\label{eq:sigma0}
    \sigma_0(\xi_1,\xi_2)=u_1v_2-u_2v_1=\xi_1^T J\xi_2,
    \qquad
    \xi_j=(u_j,v_j),
    \qquad
    J=\begin{pmatrix}0&+1\\-1&0\end{pmatrix}.
\end{align} 
For a finite region $\Gamma\subset\La$, set
\begin{align}
    V_\Gamma:=\bigoplus_{x\in\Gamma}E_0,
    \qquad
    \sigma_\Gamma(\xi_1,\xi_2):=
    \sum_{x\in\Gamma}\sigma_0(\xi_1(x),\xi_2(x))\hspace{3mm} \mbox{and}\hspace{3mm} \CH_\Gamma&=\bigotimes_{x\in \Gamma}\CH\,.
\end{align} 
Define $\CA_\Gamma$ to be the complex $*$-algebra linearly spanned by
Weyl symbols $W_\Gamma(\xi)$, $\xi\in V_\Gamma$, with multiplication and
involution determined by the Weyl relations
\begin{align}\label{eq:Weyl_rels}
    W_\Gamma(\xi_1)W_\Gamma(\xi_2)
    =e^{\frac{i}{2}\sigma_\Gamma(\xi_1,\xi_2)}W_\Gamma(\xi_1+\xi_2),
    \qquad
    W_\Gamma(\xi)^*=W_\Gamma(-\xi).
\end{align} 
For a one-point region, $\CA_{\{x\}}$ is the on-site algebraic Weyl CCR
$*$-algebra.  If $\Gamma_1\subset\Gamma_2$, we include
$\CA_{\Gamma_1}\hookrightarrow\CA_{\Gamma_2}$ by extending labels by zero:
\begin{align}
    W_{\Gamma_1}(\xi)\longmapsto W_{\Gamma_2}(\widetilde \xi),
    \qquad
    \widetilde\xi|_{\Gamma_1}=\xi,
    \quad
    \widetilde\xi|_{\Gamma_2\setminus\Gamma_1}=0.
\end{align}
The algebra of local Weyl observables is the algebraic inductive limit
\begin{align}
    \CA_{\mathrm{loc}}
    :=\bigcup_{\Gamma\Subset\La}\CA_\Gamma.
\end{align}

Throughout the paper, we assume that $\CH$ is the Schr\"odinger representation of $\CA_{\{x\}}$, i.e. $\pi:\CA_{\{x\}}\to \CB(\CH)$ is given by
\begin{align}\label{eq:Weyl_action_R2}
    [\pi(W(u,v))f](x)=e^{i v(x+u/2)}f(x+u),
    \qquad f\in L^2(\IR)\,.
\end{align} 

This representation is regular and irreducible, and by the Stone--von Neumann
theorem it is unique up to unitary equivalence
 among regular irreducible
representations, see e.g. \cite{reed1980methods}.  By Stone's theorem, in this representation there are
unbounded position and momentum operators $\hat q$ and $\hat p$  with
 $[\hat q,\hat p]=i$ on the common core and
  $\pi(W(u,v))=e^{i(u\hat p+v\hat q)}$.
 For finite $\Gamma$, $\CA_{\Gamma}$ acts on $\CH_{\Gamma}$ as the tensor product of Schr\"odinger representations, denoted by $\pi_{\Gamma}$

A state on $\CA_{\mathrm{loc}}$ is a normalized positive linear functional:
$\omega(1)=1$ and $\omega(a^*a)\geq 0$ for all
$a\in\CA_{\mathrm{loc}}$.  We say that $\omega$ is locally normal, with
respect to the Schr\"odinger net, if for every finite
$\Gamma\Subset\La$ there exists a density operator
$\rho_\Gamma\in\CT^1(\CH_\Gamma)$ such that
\begin{align}\label{eq:local-normality-schrodinger}
\omega(a)=\Tr_{\CH_\Gamma}\!\bigl(\rho_\Gamma\pi_\Gamma(a)\bigr),
    \qquad
    a\in\CA_\Gamma.
\end{align}
Throughout the paper, we suppress $\pi_\Gamma$ and write
$W_\Gamma(\xi)$ both for the abstract Weyl symbol and for its represented
Schr\"odinger operator. 
\begin{remark}
    A standard example of a non-normal state on $\CA_{x}$ is the sharp position state: 
    \begin{align}
        \omega_{q_0}(W(u,v))=\begin{cases}
            e^{ivq_0}\,,\hspace{3mm}u\neq 0\,,\\
            0\,,\hspace{8mm} \mbox{otherwise}\,.
        \end{cases}
    \end{align}
    Similarly, the sharp momentum state is not normal either.
\end{remark}

The strictly local phase-space labels are given by the finitely-supported sequences from
\begin{align}
    V\coloneqq c_{00}(\La,\IR^2).
\end{align}
For $\xi\in V$ and any finite $\Gamma$ containing $\supp(\xi)$, we write
$W(\xi)$ for $W_\Gamma(\xi|_\Gamma)\in\CA_\Gamma\subset\CA_{\mathrm{loc}}$.
The product of local Weyl operators is therefore
\begin{align}\label{eq:Weyl_rels_local}
    W(\xi_1)W(\xi_2)=e^{\frac{i}{2}\sigma(\xi_1,\xi_2)}W(\xi_1+\xi_2),
    \qquad
    \sigma(\xi_1,\xi_2)=\sum_{x\in\La}\sigma_0(\xi_1(x),\xi_2(x)).
\end{align}
We will interchangeably view $V$ as a sequence space or as a space of functions.
It is also convenient to use the Laurent-polynomial group ring
$R:=\IR[u_1^{\pm1},\ldots,u_d^{\pm1}]\cong\IR[\La]$, so that
$V\cong R^2$ is a free module of rank $2$.

Together with the set of labels $V$ for local Weyl operators, we define its
completion with respect to the $\ell^2$-norm
\begin{align}
    P\coloneqq\ell^2(\IZ^d,\IR^2),
    \qquad
    \norm{\xi}_2^2=\sum_x \norm{\xi_x}_{\IR^2}^2.
\end{align}
The same formula for $\sigma$ extends continuously to $P$, since the sum is
absolutely convergent by Cauchy--Schwarz.

\subsection{Gaussian Quantum Cellular Automata}

We now define the class of dynamics analyzed in this paper. Let $T_y$ denote
translation by $y\in \Lambda$ on the label space $V$
\begin{align}
        (T_y \xi)(x) := \xi(x-y),\,\quad x\in \La\,
\end{align}
and let $\tau_y$ denote the lattice translation on local
observables. In particular, on the local Weyl operators,
\begin{align}
    \tau_y(W(\xi)) = W(T_y \xi).
\end{align}
For finite sets $\Gamma_1,\Gamma_2\subset \Lambda$, we write
\begin{align}
    \Gamma_1+\Gamma_2 := \{x+y:x\in \Gamma_1,\ y\in \Gamma_2\}.
\end{align}

\begin{definition}[Gaussian quantum cellular automaton]
A Gaussian quantum cellular automaton, or GQCA, is a $*$-automorphism
$\alpha$ of the algebraic local Weyl algebra $\CA_{\mathrm{loc}}$ for which
there exist a real-linear bijection $\Phi:V\to V$ and a character
$\chi:V\to\IR/2\pi\IZ$ such that
\begin{align}
    \alpha(W(\xi))=e^{i\chi(\xi)}W(\Phi\xi),
    \qquad \xi\in V.
\end{align}
We require the following locality and covariance conditions.
\begin{enumerate}
    \item $\Phi$ and $\Phi^{-1}$ have finite spread: there is a finite set
    $S\subset\La$ such that, for every $\xi\in V$,
    \begin{align}
        \supp(\Phi\xi)\subset\supp(\xi)+S,
        \qquad
        \supp(\Phi^{-1}\xi)\subset\supp(\xi)+S.
    \end{align}
    Equivalently, $\alpha(\CA_\Gamma)\subset\CA_{\Gamma+S}$ and
    $\alpha^{-1}(\CA_\Gamma)\subset\CA_{\Gamma+S}$ for every finite
    $\Gamma\subset\La$.
    \item $\Phi$ is translation invariant:
    \begin{align}
        \Phi T_y=T_y\Phi,
        \qquad y\in\La,
    \end{align}
    equivalently $\alpha\circ\tau_y=\tau_y\circ\alpha$ on
    $\CA_{\mathrm{loc}}$.
\end{enumerate}
The Weyl relations imply that $\Phi$ preserves the symplectic form,
\begin{align}
    \sigma(\Phi\xi_1,\Phi\xi_2)=\sigma(\xi_1,\xi_2),
    \qquad \xi_1,\xi_2\in V.
\end{align}
Conversely, because $\CA_{\mathrm{loc}}$ is generated algebraically by the Weyl
symbols, every finite-spread symplectic bijection $\Phi:V\to V$, together with
such a phase character $\chi$, defines a $*$-automorphism by the displayed
formula.
\end{definition}
\textbf{Notation:} The non-scalar part of a GQCA is a finite-spread symplectic map on the
phase-space labels. The scalar phase $e^{i\chi(\xi)}$ will play no role in our
analysis; we therefore suppress the phases in what follows and we refer to $\Phi$ as GQCA.

\vspace{3mm}
Let $R := \mathbb R[u_1^{\pm 1},\ldots,u_d^{\pm 1}]
    \cong \mathbb R[\La]$
be the Laurent-polynomial group ring of the lattice. Under the identification
$V\cong R^{2}$, translation-invariance and finite spread imply that $\Phi$ is represented by a Laurent-polynomial matrix
\begin{align}
    \Phi(u)=\sum_{x\in \Gamma}\Phi_x u^x
    \in \operatorname{Mat}_{2}( R),
    \qquad
    u^x:=u_1^{x_1}\cdots u_d^{x_d},
    \end{align}
where $\Gamma\subset \La$ is finite and
$\Phi_x\in \operatorname{Mat}_{2}(\mathbb R)$. Its action on labels is the
finite convolution
\begin{align}
    (\Phi \xi)(x)
    =
    \sum_{y\in \Gamma}\Phi_y \xi(x-y),
    \qquad x\in\Lambda,\ \xi\in V.
\end{align}

The condition that $\Phi$ preserves $\sigma$ is equivalently
\begin{align}\label{eq:fiber_symplectic_condition}
    \Phi(u^{-1})^T J \Phi(u)=J.
\end{align}
We denote by $\operatorname{Sp}( R,2)$ the group of such Laurent-polynomial symplectic matrices. The inverse is again Laurent-polynomial, since $\Phi(u)^{-1}=J^{-1}\Phi(u^{-1})^T J$.

The same finite convolution formula extends $\Phi$ from the local label space
$V$ to the Hilbert phase space
\begin{align}
    P=\ell^2(\mathbb Z^d,\mathbb R^{2}).
\end{align}
This extension is useful analytically, although a general element of $P$ need
not label a strictly local Weyl operator.

\begin{lemma}
Every $\Phi\in \operatorname{Sp}( R,2)$ acts on
$P=\ell^2(\mathbb Z^d,\mathbb R^{2})$ as a bounded invertible operator.
\end{lemma}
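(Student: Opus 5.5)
The plan is to bound $\Phi$ directly as a finite sum of translations, and then obtain invertibility by applying the same bound to the explicit Laurent-polynomial inverse. Write $\Phi(u)=\sum_{x\in\Gamma}\Phi_x u^x$ with $\Gamma$ finite. The convolution action on $P$ is then
\begin{align}
    \Phi=\sum_{y\in\Gamma}\Phi_y T_y,
\end{align}
where $\Phi_y$ acts pointwise on $\IR^2$ and $T_y$ is the translation. Each $T_y$ is an isometry of $P=\ell^2(\IZ^d,\IR^2)$. Each pointwise multiplication by the constant matrix $\Phi_y$ has operator norm equal to its matrix norm $\norm{\Phi_y}$. By the triangle inequality,
\begin{align}
    \norm{\Phi}_{P\to P}\leqslant\sum_{y\in\Gamma}\norm{\Phi_y}<\infty .
\end{align}
So the convolution formula extends from $V$ to a bounded operator on $P$; it agrees with the continuous extension, since $V$ is dense in $P$.

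For invertibility I use the formula noted just before the lemma: $\Psi(u):=J^{-1}\Phi(u^{-1})^TJ$ is again a Laurent-polynomial matrix, and \eqref{eq:fiber_symplectic_condition} gives $\Psi(u)\Phi(u)=1$. Because $\Psi(u)$ is a left inverse of a square matrix over the commutative ring $R$, it is also a right inverse ($\det\Psi\det\Phi=1$). Alternatively, one can observe that $\Phi^{-1}$ on $V$ is translation invariant of finite spread, so its symbol is $\Psi$.

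By the previous paragraph, $\Psi$ also defines a bounded operator on $P$. The map from $R$-matrices to convolution operators on $V$ is multiplicative: the composition of convolutions is convolution by the product symbol. This is a finite computation on each $\xi\in V$, using $T_yT_z=T_{y+z}$. Hence $\Psi\Phi=\Phi\Psi=1$ on $V$. Both sides are bounded, so by density of $V$ in $P$ the identities hold on all of $P$. Therefore $\Phi$ is a bounded bijection of $P$ with bounded inverse $\Psi$.

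The only mildly delicate point is this multiplicativity together with the left/right inverse issue. Both are settled by working on $V$, where all sums are finite, and then extending by continuity.
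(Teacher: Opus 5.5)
Your proof is correct and follows the paper's route. You write $\Phi=\sum_{x\in\Gamma}\Phi_x T_x$, bound its norm by $\sum_{x\in\Gamma}\norm{\Phi_x}$ with the triangle inequality, and get invertibility from the Laurent-polynomial inverse $J^{-1}\Phi(u^{-1})^TJ$; the paper only says the same bound applies to $\Phi^{-1}$, while you also check that the two bounded extensions are mutual inverses (multiplicativity of the symbol map on $V$ plus density), a step the paper leaves implicit.
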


\begin{proof}
Let $T_m$ denote the unitary translation on $P$. Then,
\begin{align}
    \Phi \xi=\sum_{x\in \Gamma}\Phi_x T_x \xi.
\end{align}
Hence
\begin{align}
    \|\Phi \xi\|_2
    \leqslant
    \sum_{x\in \Gamma}\|\Phi_xT_x \xi\|_2
    \leqslant
    \left(\sum_{x\in \Gamma}\|\Phi_x\|\right)\|\xi\|_2.
\end{align}
Thus $\Phi$ is bounded on $P$. Since $\Phi^{-1}$ is also represented by a
Laurent-polynomial matrix, the same argument applies to $\Phi^{-1}$, so the
extension is bounded and invertible.
\end{proof}

\section{Many-body quantum Riemann-Lebesgue lemma }\label{section 4}

In this section we prove estimates on expectation values of local Weyl operators which are essential for the further discussion of thermalization. The classic Riemann-Lebesgue lemma states that the Fourier coefficients of an integrable function vanish at infinity. The quantum Riemann-Lebesgue lemma of R. Werner states that the ``quantum Fourier coefficients'' of a normal state vanish at infinity \cite{10.1063/1.526310}. The main result of this section is a many-body generalization of the quantum Riemann-Lebesgue lemma where expectation values of local Weyl operators play the role of the Fourier coefficients. 

\begin{remark}
Before treating the genuinely many-body situation, let us record the simplest case where the dynamics does not spread the support of a Weyl label but makes its phase-space norm diverge. Then one can use the ordinary finite-dimensional quantum Riemann--Lebesgue lemma of \cite{10.1063/1.526310}.

\begin{lemma}
\label{prop:fixed_support_RL}
Let $\omega$ be a locally normal state on $\CA_{\mathrm{loc}}$ and let
$\xi\in V$.  Put $\xi_k=\Phi^k\xi$.  Assume that there exists a finite region
$\Gamma\subset\IZ^d$ such that
$\operatorname{supp}(\xi_k)\subseteq\Gamma$ for all $k\geqslant0$ and that
$\|\xi_k\|_2\to\infty$ in $\IR^{2|\Gamma|}$.  Then,
\begin{align}
    \lim_{k\to\infty}\omega(W(\xi_k))=0.
    \label{eq:fixed_support_RL_decay}
\end{align}  
\end{lemma}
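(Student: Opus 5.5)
Since every $\xi_k$ is supported in the fixed finite region $\Gamma$, the whole problem lives in a single finite-volume Schr\"odinger representation. I would reduce to Werner's finite-dimensional quantum Riemann--Lebesgue lemma.

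First, by local normality \eqref{eq:local-normality-schrodinger} there is a density operator $\rho_\Gamma\in\CT^1(\CH_\Gamma)$ with
\begin{align}
\omega(W(\xi_k))=\Tr\bigl(\rho_\Gamma W_\Gamma(\xi_k|_\Gamma)\bigr)
\end{align}
for all $k$. Here $\CH_\Gamma\cong L^2(\IR^{|\Gamma|})$, and $W_\Gamma(\eta)$, $\eta\in\IR^{2|\Gamma|}$, is the standard Weyl system on this space. So it suffices to show that the quantum Fourier transform $\hat\rho(\eta)=\Tr(\rho_\Gamma W_\Gamma(\eta))$ vanishes as $|\eta|\to\infty$ in $\IR^{2|\Gamma|}$. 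The Euclidean norm of $\eta=\xi_k|_\Gamma$ equals $\|\xi_k\|_2$, which tends to infinity by hypothesis.

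Second, the decay of $\hat\rho$ is exactly Werner's result \cite{10.1063/1.526310}. For completeness, I would reprove it by approximation.

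1. Write $\rho_\Gamma=\sum_j p_j|\psi_j\rangle\langle\psi_j|$. Truncate the sum and approximate each $\psi_j$ in $L^2$ by Schwartz functions. This gives a finite-rank $\rho_\epsilon$ with $\|\rho_\Gamma-\rho_\epsilon\|_1<\epsilon$.
2. Since $\|W_\Gamma(\eta)\|=1$, the error satisfies $|\Tr((\rho_\Gamma-\rho_\epsilon)W_\Gamma(\eta))|<\epsilon$ uniformly in $\eta$.
3. It then suffices to treat rank-one terms $\langle\phi,W_\Gamma(u,v)\psi\rangle$ with $\phi,\psi$ Schwartz. By \eqref{eq:Weyl_action_R2} this equals
\begin{align}
\int \overline{\phi(x)}\,e^{iv\cdot(x+u/2)}\,\psi(x+u)\,dx .
\end{align}

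The only real obstacle is the decay of these rank-one terms, and I would split by which coordinate of $(u,v)$ is large.

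If $|u|\to\infty$, the overlap of $\phi$ with the translate $\psi(\cdot+u)$ tends to zero. This follows by dominated convergence, or by Cauchy--Schwarz after cutting $\phi$ and $\psi$ to compact support.

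If $u$ stays bounded while $|v|\to\infty$, the integral is the Fourier transform of $x\mapsto\overline{\phi(x)}\psi(x+u)$ evaluated at $v$. This family of functions is precompact in $L^1$ as $u$ ranges over a bounded set, so the classical Riemann--Lebesgue lemma applies uniformly in $u$ and gives decay.

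Passing to subsequences covers both regimes, so the rank-one terms tend to zero. Letting $\epsilon\to0$ then gives $\hat\rho(\eta)\to0$, and hence $\omega(W(\xi_k))\to0$.
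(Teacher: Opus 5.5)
Your proposal is correct and follows the paper's proof. Both use local normality to reduce to $\Tr\bigl(\rho_\Gamma W_\Gamma(\xi_k)\bigr)$ on the single space $\CH_\Gamma$ and then invoke Werner's quantum Riemann--Lebesgue lemma; the paper simply cites Werner, whereas your self-contained reproof (finite-rank trace-norm approximation, then the classical Riemann--Lebesgue lemma with the $|u|\to\infty$ versus bounded-$u$ subsequence split) is sound and just makes explicit what the paper takes from the reference.
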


\begin{proof}
Since all $W(\xi_k)$ belong to the same finite-volume Weyl algebra
$\CA_\Gamma$, local normality gives a density operator $\rho_\Gamma$ on
$\CH_\Gamma$ such that, in the Schr\"odinger representation,
\[
    \omega(W(\xi_k))=
    \Tr_{\CH_\Gamma}(\rho_\Gamma W_\Gamma(\xi_k)).
\]
The right-hand side is the quantum Fourier transform of the trace-class operator
$\rho_\Gamma$ evaluated at $\xi_k$.  Since $\|\xi_k\|\to\infty$, the quantum
Riemann--Lebesgue lemma \cite{10.1063/1.526310} for finitely many oscillators
gives $\operatorname{Tr}_{\CH_\Gamma}(\rho_\Gamma W_\Gamma(\xi_k))\to0$.
\end{proof}

The hypothesis of Lemma \ref{prop:fixed_support_RL} is non-diffusive, and thus prevents full thermalization over the Weyl algebra. To amend this, one must impose diffusive properties onto the dynamics; consequently, the density operators live on larger and larger tensor factors, so the preceding finite-dimensional Riemann--Lebesgue argument is not uniform. The estimates beginning with Theorem \ref{thm: Weyl average bound} and the following corollaries replace this argument by a bound involving both $\|\Phi^k \xi\|_2$ and $|\operatorname{supp}(\Phi^k \xi)|$.

\vspace{3mm}
\end{remark}

This section is structured as follows. In \S\ref{subsection 4.1}, we introduce a quantity that bounds the expectation value of Weyl operators in locally normal states. In \S\ref{subsection 4.2}, we introduce convention for handling unbounded operators and prove some useful convergences and bounds on states and observables. In \S\ref{subsection 4.3} we connect local particle number to our bounds, followed by  \S\ref{subsection 4.4} where we demonstrate that finite local particle number averages yield trace-class commutators between density operators and Weyl operators. Finally, in \S\ref{subsection 4.5}, we prove that, for bosonic many-body systems with bounded particle density, Weyl operator averages on locally normal states are bounded by the ratio of the QCA spread to the support of the Weyl label. That is to say, in the unstable sector of the QCA's corresponding phase space map, the Weyl label norm grows exponentially, so long-time Weyl averages vanish.

\subsection{Bounded Weyl averages}\label{subsection 4.1}
For $\xi = (u, v) \in V_{\Gamma}$ we write
\begin{align}
    R_\Gamma(\xi) := \sum_{x \in \Gamma} \bigl(u(x) \hat{p}_x + v(x) \hat{q}_x\bigr), \qquad W_\Gamma(\xi) := e^{iR_\Gamma(\xi)}.
\end{align}
\begin{definition}
    Let $\Gamma \subset \bZ^d$ be a finite subset and $\rho_\Gamma$ a density operator on $\mathcal{H}_\Gamma$. For $r\geqslant 0$
    \begin{align}
        \Omega_{\rho_\Gamma}(r) \coloneqq \sup_{\lVert{\xi}\rVert \leqslant r} \lVert W_\Gamma(\xi)\rho_\Gamma W_\Gamma(\xi)^* - \rho_\Gamma\rVert_1.
    \end{align}
\end{definition}
Let us also define a sign-flipping coordinate: for $\xi \in \IR^{2|\Gamma|}\setminus\{0\}$, let
\begin{align}
    s_{\xi} \coloneqq \frac{\pi J \xi}{\lVert \xi \rVert^2},
\end{align}
for $J$ the symplectic matrix. Then, $\lVert s_{\xi}\rVert = \pi/\lVert \xi \rVert$, and $\sigma(s_\xi, \xi) = \pi$. From the Weyl relations, it follows that
%\textcolor{red}{We use the relation $W(x)W(y)=e^{i\frac{\sigma(x,y)}{2}}W(x+y)$}
\begin{align}
    W(s_{\xi})^*W(\xi)W(s_{\xi}) = -W(\xi).
\end{align}

In what follows, let $\lVert \cdot\rVert_1$ denote the trace norm, $\lVert T\rVert_1 =\Tr(\sqrt{T^*T})$ and $\CT^1$ denote the ideal of trace-class operators $\CT^1\coloneqq  \{ A\in \mathcal{B(H)}\,|\, \|A\|_1<  \infty\}$.
\begin{lemma}\label{thm: Weyl average bound}
    Let $\alpha$ be a Gaussian QCA with induced phase-space map $\Phi$, $\omega$ locally normal, and let 
    \begin{align}
        \xi_0 \in V\setminus\{0\}, \quad \xi_k := \Phi^k \xi_0, \quad \Gamma_k := \mathrm{supp}(\xi_k).
    \end{align} 
    Then, for every $k\geqslant 0$ with $\xi_k \neq 0$, 
    \begin{align}
        |\omega(W(\xi_k))|\leqslant \frac{1}{2} \Omega_{\rho_{\Gamma_k}}\left(\frac{\pi}{\lVert \xi_k \rVert}\right)
    \end{align}
    with $\rho_{\Gamma_k}$ the density operator corresponding to $\omega$ on $\CA_{\Gamma_k}$.
\end{lemma}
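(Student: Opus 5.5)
The plan is to use the sign-flipping label $s_{\xi_k}$ together with local normality. Fix $k$ with $\xi_k\neq 0$ and write $\xi=\xi_k$, $\Gamma=\Gamma_k$, $\rho=\rho_{\Gamma}$. Because $\Gamma=\operatorname{supp}\xi$, we have $J\xi$ supported in $\Gamma$ as well, so $s:=s_{\xi}=\pi J\xi/\norm{\xi}^2$ lies in $V_\Gamma$. Hence $W(s)$ and $W(\xi)$ both belong to $\CA_\Gamma$, and they act on $\CH_\Gamma$ through $\pi_\Gamma$. The first step is to check the anticommutation relation $W(s)^*W(\xi)W(s)=-W(\xi)$ directly from the Weyl relations \eqref{eq:Weyl_rels}. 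Conjugation produces the phase $e^{i\sigma(\xi,s)}$ (or its inverse, depending on the sign convention). We have $\sigma(\xi,s)=\pi\,\xi^TJJ\xi/\norm{\xi}^2=-\pi$, so the phase is $-1$ either way.

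The second step is to rewrite $\omega(W(\xi))$ using local normality: $\omega(W(\xi))=\Tr(\rho W(\xi))$. Using cyclicity of the trace and the fact that $W(s)$ is unitary,
\begin{align}
\Tr\bigl(W(s)\rho W(s)^*\,W(\xi)\bigr)=\Tr\bigl(\rho\,W(s)^*W(\xi)W(s)\bigr)=-\Tr(\rho W(\xi)).
\end{align}
Adding $\Tr(\rho W(\xi))$ to both sides gives
\begin{align}
2\,\Tr(\rho W(\xi))=\Tr\bigl((\rho-W(s)\rho W(s)^*)W(\xi)\bigr).
\end{align}

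The final step is the Hölder bound $\abs{\Tr(TB)}\leqslant\norm{T}_1\norm{B}$, applied with $\norm{W(\xi)}=1$. This yields
\begin{align}
\abs{\omega(W(\xi))}\leqslant\tfrac12\norm{W(s)\rho W(s)^*-\rho}_1.
\end{align}
Since $\norm{s}=\pi/\norm{\xi}$, the label $s$ is admissible in the supremum defining $\Omega_\rho(\pi/\norm{\xi})$, which gives the claim.

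The only delicate point is making sure that everything takes place in the same finite-volume representation $\CH_{\Gamma_k}$. This holds because $s_{\xi_k}$ has the same support as $\xi_k$, so no enlargement of the region or compatibility check between different density operators $\rho_\Gamma$ is needed. Note that $\Phi$ plays no role beyond producing the label $\xi_k$.
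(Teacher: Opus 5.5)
Your proposal is correct and is essentially the paper's proof. Both use the sign-flipping label $s_{\xi_k}$ to get the identity $2\Tr(\rho_{\Gamma_k}W(\xi_k))=\Tr\bigl((\rho_{\Gamma_k}-W(s_{\xi_k})\rho_{\Gamma_k}W(s_{\xi_k})^*)W(\xi_k)\bigr)$, then apply the trace-norm H\"older bound and note that $s_{\xi_k}$ is admissible in $\Omega_{\rho_{\Gamma_k}}$ because $\norm{s_{\xi_k}}=\pi/\norm{\xi_k}$; your checks that $\supp s_{\xi_k}=\Gamma_k$ (so everything lives in $\CH_{\Gamma_k}$) and that the conjugation phase is $e^{i\sigma(\xi_k,s_{\xi_k})}=-1$ make explicit details the paper leaves implicit.
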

\begin{proof}
    Let $s_k$ denote the sign-flipping coordinate associated to $\xi_k$. Then,
    \begin{align}
        2\omega(W(\xi_k)) &= 2\Tr \big[\rho_{\Gamma_k}W_{\Gamma_k}(\xi_k)\big] = \Tr\bigg[\big(\rho_{\Gamma_k} - W(s_{\xi_k})\rho_{\Gamma_k}W(s_{\xi_k})^*\big)W(\xi_k)\bigg].
    \end{align}
    Recall that, for $T\in \mathcal{T}^1$ and $B$ a bounded operator, Theorem 2.8 of \cite{simon2005trace} gives the Schatten-H\"older inequality
    \begin{align}
        \lVert TB \rVert_1 \leqslant \lVert T\rVert_1 \lVert B\rVert.
    \end{align}
    Consequently,
    \begin{align}
        |\omega(W(\xi_k))| \leqslant \frac{1}{2} \lVert \rho_{\Gamma_k} - W(s_k)\rho_{\Gamma_k} W(s_k)^*\rVert_1 \leqslant \frac{1}{2} \Omega_{\rho_{\Gamma_k}} \left(\frac{\pi}{\lVert \xi_k\rVert} \right).
    \end{align}
\end{proof}
\subsection{Some structural results on positive local observables}\label{subsection 4.2}

We outline certain structural results on positive observables that are needed to prove our claims. A standard reference on the spectral theory of unbounded operators is \cite{reed1980methods}, Chapter VIII. We denote the domain of an unbounded operator $A$ by $D(A)$. In this paper we use only densely-defined unbounded operators so that the adjoints always exist. We use the convention that the (densely defined) unbounded operator $A$ is self-adjoint if $D(A)=D(A^*)$ and $A^*=A$. 

\begin{definition}\label{def:unbounded_expectation_values}
Let $\rho$ be a density operator, then for any positive self-adjoint operator $A$ (possibly unbounded), we use the spectral theorem to define the expectation value of $A$ with respect to $\rho$ as

\begin{align}
    \omega_\rho(A) \coloneqq \sup_{n \in \IN} \Tr \left(A^{(n)}\rho \right) \in [0,\infty],
\end{align}
for the spectral truncation
\begin{align}
    A^{(n)} = A\mathbb{I}([0,n]) = \int_{[0,n]}\lambda d\mu_A(\lambda).
\end{align}
\end{definition}
\begin{proposition-definition}\label{proposition_definition_unbounded}
    Let $\rho = \sum_{j\geqslant 1}\lambda_j \psi_j \psi_j^*$ be the spectral decomposition of a density operator on $\mathcal{H}$. We choose the convention that $\left\lVert A^{1/2}\psi_j \right\rVert  =\infty $ when $\psi_j \notin D(A^{1/2})$. For $A$ a positive self-adjoint operator on $\CH$:
    \begin{align}
        \omega_\rho(A) = \sum_{j\geqslant 1} \lambda_j \left\lVert A^{1/2}\psi_j \right\rVert^2 \in [0, \infty]\,.
    \end{align}
    \label{prop: expectation of positive unbounded op}
\end{proposition-definition}
\begin{proof}
    We have for each $n$, a well-defined expression
    \begin{align}
        \Tr(A^{(n)}\rho) = \sum_{j \geqslant 1} \lambda_j \langle\psi_j|A^{(n)}\psi_j\rangle.
    \end{align}
    For each fixed $j$,
    \begin{align}
        \langle\psi_j|A^{(n)}\psi_j\rangle \overset{n\rightarrow\infty}{\longrightarrow}  \left\lVert A^{1/2}\psi_j \right\rVert^2 
    \end{align}
    from below. Therefore, monotone convergence on each summand yields
    \begin{align}
        \sup_n \Tr \left(A^{(n)}\rho \right) = \sum_{j \geqslant1} \lambda_j \left\lVert A^{1/2}\psi_j \right\rVert^2 .
    \end{align}
\end{proof}
\begin{corollary}
    Let $A_1, A_2$ be positive self-adjoint operators on $\mathcal{H}$. Suppose that $D(A_2^{1/2}) \subseteq D(A_1^{1/2})$ and 
    \begin{align}
        \left\lVert A_1^{1/2}\psi \right\rVert^2 \leqslant \left\lVert A_2^{1/2}\psi \right\rVert^2 , \;\;\; \psi \in D(A_2^{1/2}).
    \end{align}
    Then, for every density operator $\rho$ with $\omega_\rho(A_2) < \infty$,
    \begin{align}
        \omega_\rho(A_1)  \leqslant \omega_\rho(A_2).
    \end{align}
\end{corollary}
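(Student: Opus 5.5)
The corollary follows from the Proposition-Definition \ref{proposition_definition_unbounded}, which writes both expectation values as sums over the same eigenbasis of $\rho$ and then compares them term by term. I will write $\rho=\sum_{j\geqslant1}\lambda_j\psi_j\psi_j^*$ with $\lambda_j\geqslant0$. By the Proposition-Definition,
\begin{align}
    \omega_\rho(A_i)=\sum_{j\geqslant1}\lambda_j\norm{A_i^{1/2}\psi_j}^2\in[0,\infty],\qquad i=1,2,
\end{align}
where $\norm{A_i^{1/2}\psi_j}=\infty$ whenever $\psi_j\notin D(A_i^{1/2})$.

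The first step is to use the hypothesis $\omega_\rho(A_2)<\infty$ to locate the relevant vectors in the domain. Every term in the series for $\omega_\rho(A_2)$ is nonnegative, so finiteness forces $\norm{A_2^{1/2}\psi_j}<\infty$ for each $j$ with $\lambda_j>0$. Under the stated convention this means $\psi_j\in D(A_2^{1/2})$ for all such $j$. The domain inclusion $D(A_2^{1/2})\subseteq D(A_1^{1/2})$ then gives $\psi_j\in D(A_1^{1/2})$ for the same $j$. The indices with $\lambda_j=0$ contribute nothing to either sum, using the usual convention $0\cdot\infty=0$ implicit in the Proposition-Definition; I will state this convention explicitly.

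The second step is the termwise comparison. For every $j$ with $\lambda_j>0$, the assumed inequality applies to $\psi_j\in D(A_2^{1/2})$ and gives $\lambda_j\norm{A_1^{1/2}\psi_j}^2\leqslant\lambda_j\norm{A_2^{1/2}\psi_j}^2$. Summing these nonnegative terms yields $\omega_\rho(A_1)\leqslant\omega_\rho(A_2)$. No convergence issues arise, because the series have nonnegative terms and the right-hand one is finite.

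The only delicate point is the handling of zero eigenvalues and of eigenvectors that lie outside the domains. The hypothesis $\omega_\rho(A_2)<\infty$ is exactly what excludes infinite terms with positive weight. An alternative route through the spectral truncations $A_i^{(n)}$ would not work, since $A_1^{(n)}\leqslant A_2^{(n)}$ need not hold; this is why the argument should go through the eigenvector formula rather than through Definition \ref{def:unbounded_expectation_values} directly.
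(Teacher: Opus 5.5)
Your proposal is correct and follows the paper's own proof: the Proposition-Definition together with $\omega_\rho(A_2)<\infty$ places every eigenvector $\psi_j$ of $\rho$ with $\lambda_j>0$ in $D(A_2^{1/2})\subseteq D(A_1^{1/2})$, and the inequality then follows by comparing the two nonnegative series term by term. Your explicit handling of zero eigenvalues and of the domain inclusion only spells out details the paper leaves implicit.
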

\begin{proof}
    The assumption of finiteness of $\omega_\rho(A_2)$ implies that for all $\psi_j$ in the spectral decomposition of $\rho$, $\psi_j \in D(A_2^{1/2})$, by Proposition-Definition \ref{prop: expectation of positive unbounded op}. The statement immediately follows from term-by-term bounding. 
\end{proof}

We now wish to make sense of the commutator $[A,\rho]$ for $A$ self-adjoint and unbounded. We note that $A\rho \in \mathcal{T}^1$ when $\textrm{ran}(\rho)\subseteq D(A)$ and $\|A\rho\|_1<\infty$.

\begin{definition}

    Assume $A$ is unbounded self-adjoint, $\rho\in \CT^1$, and $A\rho \in \mathcal{T}^1$ and assume that there exists a continuous  extension\footnote{Note that if such an extension exists, then it is unique} of the map $\rho A: D(A) \rightarrow \mathcal{H}$ to $\widetilde{\rho A}: 
\mathcal{H} \rightarrow \mathcal{H}$ with $\|\widetilde{\rho A}\|_1<\infty$. Then we define
    \begin{align}
        [A,\rho] \coloneqq A\rho - \widetilde{\rho A}\in \mathcal{T}^1 \,.
    \end{align}
\end{definition}
The above definition may seem particularly restrictive, but we will see that these properties hold in the cases we will care about. 

\begin{lemma} \label{lemma: trace class convergence}
    Let $\{B_n\}_{n=1}^{\infty}$  be a sequence of bounded operators on $\CH$ such that
    \begin{align}
        B_n \xrightarrow{n\to \infty}B\in \CB\CH\,\quad *\mbox{-strongly}.
    \end{align}
    Then, for any $T \in \mathcal{T}^1$,
    \begin{align}
        \lVert B_nT - BT\rVert_1\xrightarrow{n\to \infty}0, \;\;\; \lVert TB_n - TB\rVert_1 \xrightarrow{n\to \infty}0.
    \end{align}
    Let $I$ be a compact interval and let $\{B_n(t)\}_{n=1}^\infty$ be a sequence of maps $I\to \CB \CH$ such that $B_n(t) \xrightarrow{n\to \infty} B(t)$ strongly, uniformly over $I$, then,
    \begin{align}
       \sup_{t\in I} \lVert B_n(t)T - B(t)T\rVert_1\xrightarrow{n\to \infty}0, \;\;\;  \sup_{t\in I}\lVert TB_n(t) - TB(t)\rVert_1 \xrightarrow{n\to \infty}0.
    \end{align}
\end{lemma}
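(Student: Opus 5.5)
The plan is to approximate $T$ in trace norm by finite-rank operators and then reduce the claim to the case of a single rank-one operator, where strong convergence applies directly. For $T\in\CT^1$ we use the singular value decomposition $T=\sum_j s_j\,\phi_j\psi_j^*$, with $\sum_j s_j=\|T\|_1<\infty$ and orthonormal families $\{\phi_j\},\{\psi_j\}$. By the uniform boundedness principle, strong convergence of $B_n$ implies $M:=\sup_n\|B_n\|<\infty$; we may also assume $\|B\|\leqslant M$.

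Fix $\varepsilon>0$ and choose $N$ with $\sum_{j>N}s_j<\varepsilon$. Let $T_N$ be the truncated sum and split
\begin{align}
\|(B_n-B)T\|_1\leqslant \|(B_n-B)T_N\|_1+\|(B_n-B)(T-T_N)\|_1 .
\end{align}
The Schatten--H\"older inequality (Theorem 2.8 of \cite{simon2005trace}) bounds the second term by $2M\varepsilon$. For the first term we use $\|\phi\psi^*\|_1=\|\phi\|\,\|\psi\|$, which gives
\begin{align}
\|(B_n-B)T_N\|_1\leqslant\sum_{j\leqslant N}s_j\,\|(B_n-B)\phi_j\|.
\end{align}
This is a finite sum, and each term tends to $0$ by strong convergence. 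Hence $\limsup_n\|(B_n-B)T\|_1\leqslant 2M\varepsilon$, and letting $\varepsilon\to0$ gives the first claim.

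For the right multiplication we use the identities $\|TB_n-TB\|_1=\|(B_n^*-B^*)T^*\|_1$ and $T^*\in\CT^1$. This is exactly where the $*$-strong hypothesis enters: $B_n^*\to B^*$ strongly, so the previous argument applies verbatim.

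For the uniform version the same decomposition works, with two points needing care.
\begin{itemize}
\item The uniform bound $\sup_{n,t}\|B_n(t)\|$ must be finite. For each fixed $t$ it is finite by uniform boundedness. I would assume (or derive from the setting in which the lemma is used) a bound uniform in $t$; uniform strong convergence over the compact interval $I$ supplies it for large $n$ if the limit family $B(t)$ is bounded uniformly.
\item The finite-rank term is controlled by $\sum_{j\leqslant N}s_j\sup_{t\in I}\|(B_n(t)-B(t))\phi_j\|$. Each of these finitely many suprema tends to $0$ by the assumption of uniform strong convergence.
\end{itemize}
The right-multiplication statement again requires the adjoints to converge uniformly strongly, which I would read as part of the hypothesis (the $*$-strong analogue). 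I expect this point, together with the uniform operator-norm bound, to be the only delicate step; the rest is the standard $\varepsilon/3$ argument.
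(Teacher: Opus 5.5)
Your proposal is correct and follows the paper's proof essentially line by line: approximate $T$ in trace norm by a finite-rank operator (your truncated singular value decomposition), bound the remainder by Schatten--H\"older with $\sup_n\|B_n\|+\|B\|$, use strong convergence on finitely many singular vectors for the finite-rank part, and get right multiplication from $\|TB_n-TB\|_1=\|B_n^*T^*-B^*T^*\|_1$.

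Your caveats on the parametrized case are well placed, since the paper only says it ``follows immediately, given uniformity in $t$''. The right-multiplication claim really does need uniform strong convergence of the adjoints: take the constant family $B_n(t)=(S^*)^n$, $B(t)=0$, $T=e_0e_0^*$, where $Se_k=e_{k+1}$ for an orthonormal basis $(e_k)_{k\geqslant0}$; then $\|TB_n(t)-TB(t)\|_1=1$ for all $n$. By contrast, the uniform norm bound does not need to be assumed. A Baire-category argument on the closed sets $\{\psi:\ \|(B_n(t)-B(t))\psi\|\leqslant N \text{ for all } n\geqslant N,\ t\in I\}$ gives $\sup_{n\geqslant N,\,t\in I}\|B_n(t)-B(t)\|<\infty$ for some $N$. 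In the paper's application $B_n(t)=e^{itA_n}$ is unitary in any case.
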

\begin{proof}
    Let $\epsilon> 0$ and choose a finite rank operator $F$ such that $\lVert T - F\rVert_1 < \epsilon$. Then,
    \begin{align}
        \lVert B_nT -BT \rVert_1 &\leqslant \lVert (B_n -B)(T -F)\rVert_1 + \lVert (B_n -B )F \rVert_1 \leqslant (\sup_n \lVert B_n\rVert + \lVert B\rVert)\epsilon + \lVert (B_n -B )F \rVert_1 .
    \end{align}

    For the second term, singular value decomposition of $F$ yields
    \begin{align}
        (B_n-B)F = \sum_{j=1}^ms_j (B_n-B)\psi_j\phi_j^*
    \end{align}
    and therefore
    \begin{align}
        \lVert (B_n-B)F\rVert_1 \leqslant \sum_{j=1}^m s_j \lVert(B_n-B)\psi_j
        \rVert \lVert \phi_j\rVert \rightarrow0.
    \end{align}
    For the left multiplication by $T$, rewrite
    \begin{align}
        \lVert TB_n -TB \rVert_1 =\lVert B_n^*T^* -B^*T^* \rVert_1,
    \end{align}
    then by strong-$*$ convergence the above proof passes through identically. The proof for the parametrized case follows immediately, given uniformity in $t$.
\end{proof}

\begin{lemma}
    Let $A$ be self-adjoint, and $A_n \coloneqq A\mathbb{I}([-n,n]) = \int_{[-n,n]}\lambda d\mu_A(\lambda)$ the symmetric spectral truncation. Then for every $\tau>0,$ $e^{iA_nt} \rightarrow e^{iAt}$ strongly, uniformly in $t \in [- \tau,\tau]$.
\end{lemma}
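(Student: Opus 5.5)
Fix $\psi\in\CH$ and $\tau>0$. The plan is to compare $e^{iA_nt}$ and $e^{iAt}$ through the spectral theorem. The pointwise difference is controlled by a uniform-in-$t$ bound, and dominated convergence then gives uniform strong convergence.

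By functional calculus, $e^{iA_nt}=f_{n,t}(A)$ with $f_{n,t}(\lambda)=e^{i\lambda t}$ for $|\lambda|\leqslant n$ and $f_{n,t}(\lambda)=1$ for $|\lambda|>n$. Note that $A_n$ vanishes on the spectral complement of $[-n,n]$, so the exponential there is $1$. With $\mu_\psi$ the spectral measure of $A$ in the vector $\psi$, a finite measure of total mass $\norm{\psi}^2$, we get
\begin{align}
    \norm{(e^{iA_nt}-e^{iAt})\psi}^2=\int_{\abs{\lambda}>n}\abs{1-e^{i\lambda t}}^2\,d\mu_\psi(\lambda)\leqslant 4\,\mu_\psi\bigl(\IR\setminus[-n,n]\bigr).
\end{align}
The right-hand side does not depend on $t$. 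Hence
\begin{align}
    \sup_{\abs{t}\leqslant\tau}\norm{(e^{iA_nt}-e^{iAt})\psi}\leqslant 2\,\mu_\psi\bigl(\IR\setminus[-n,n]\bigr)^{1/2}.
\end{align}
Since $\mu_\psi$ is finite, $\mu_\psi(\IR\setminus[-n,n])\to 0$ by continuity from above of the measure. This proves the claim, and in fact uniformly for all $t\in\IR$, not only on $[-\tau,\tau]$.

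The only step that needs care is justifying the identification $e^{iA_nt}=f_{n,t}(A)$. I would do this by noting that $A_n=g_n(A)$ with $g_n(\lambda)=\lambda\,\mathbb{I}_{[-n,n]}(\lambda)$, and then using the composition rule $h(g_n(A))=(h\circ g_n)(A)$ of the Borel functional calculus. Once this is in place, everything else is routine.

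Since the unitaries are uniformly bounded, the adjoints also converge: $(e^{iA_nt})^*=e^{-iA_nt}$, and the same argument applies with $t$ replaced by $-t$. So the convergence is in fact $*$-strong and uniform in $t$, which is exactly what Lemma~\ref{lemma: trace class convergence} needs later.
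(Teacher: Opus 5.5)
Your proof is correct and follows essentially the same route as the paper: both write $\|(e^{itA_n}-e^{itA})\psi\|^2$ as an integral against the scalar spectral measure $\mu_\psi$ of an integrand that vanishes on $[-n,n]$ and is bounded by $4$. The only difference is that you bound this integral directly by $4\mu_\psi(\IR\setminus[-n,n])$, where the paper moves the supremum over $t$ inside the integral and applies dominated convergence; your version gives uniformity over all $t\in\IR$ and the $*$-strong convergence immediately.
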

\begin{proof}
    Fix $\psi \in \mathcal{H}$ and let $\mu_\psi$ be the scalar spectral measure of $A$ associated with $\psi$, that is, $\mu_\psi: \mathscr{B}(\IR)\rightarrow \IR$ with $\mu_\psi(I) = \langle\psi | P_A(I)\psi\rangle$ for $P_A$ the spectral projector of $A$ and $\mathscr{B}(\IR)$ the Borel $\sigma$-algebra of $\IR$. We apply the spectral decomposition and obtain 
      \begin{align}
        \sup_{|t| \leqslant \tau} \norm{(e^{itA_n} - e^{itA})\psi}^2\leqslant \sup_{|t| \leqslant \tau}\int_\mathbb{R}  |e^{it\lambda\mathbb{I}_{[-n,n]}(\lambda)} - e^{it\lambda}|^2 \,d\mu_\psi(\lambda)\leqslant \int_\mathbb{R} \sup_{|t| \leqslant \tau} |e^{it\lambda\mathbb{I}_{[-n,n]}(\lambda)} - e^{it\lambda}|^2 \,d\mu_\psi(\lambda).
    \end{align}
    Note that $\mu_{\psi}$ is a finite measure. Since, for all $t$,  $|e^{it\lambda\mathbb{I}_{[-n,n]}(\lambda)} - e^{it\lambda}|\leqslant 2$, dominated convergence gives
    \begin{align}
        \lim_{n\rightarrow \infty} \int_\mathbb{R} \sup_{|t| \leqslant \tau} |e^{it\lambda\mathbb{I}_{[-n,n]}(\lambda)} - e^{it\lambda}|^2 \,d\mu_\psi(\lambda) = \int_\mathbb{R} \lim_{n\rightarrow \infty}\sup_{|t| \leqslant \tau} |e^{it\lambda\mathbb{I}_{[-n,n]}(\lambda)} - e^{it\lambda}|^2 \,d\mu_\psi(\lambda) =0.
    \end{align}

\end{proof}
For the following proposition, we will use Bochner theory to bound trace  norms, in particular, that Lebesgue theory passes through to measurable functions valued in Banach spaces. A standard introduction to the subject is Chapter V of \cite{Yosida78fanalysis}.
\begin{proposition}
    Let $A$ be a self-adjoint operator, and $\rho$ a density operator. Assume that $A\rho, \rho A \in \mathcal{T}^1$. Then, the map $F(t) \coloneqq e^{iAt}\rho e^{-iAt}$ is continuously differentiable in trace norm, and
    \begin{align}
        F'(t) = ie^{iAt}[A,\rho ]e^{-iAt}.
    \end{align}
    Equivalently,
    \begin{align}
        e^{iAt}\rho e^{-iAt} - \rho = i\int_0^te^{iAs}[A,\rho] e^{-iAs}ds,
    \end{align}
    a Bochner integral in $\mathcal{T}^1$. In particular, 
    \begin{align}
        \norm{e^{itA}\rho e^{-itA} - \rho}_1 \leqslant |t|\,\norm{[A,\rho]}_1.
    \end{align}
\end{proposition}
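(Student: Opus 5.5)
We will prove differentiability by approximating $A$ with its bounded spectral truncations $A_n = A\mathbb{I}([-n,n])$ and passing to the limit in the integral identity, not in the derivative. For bounded $A_n$, the map $F_n(t) = e^{iA_nt}\rho e^{-iA_nt}$ is norm-differentiable as a product of norm-differentiable maps. Its derivative is $F_n'(t) = ie^{iA_nt}[A_n,\rho]e^{-iA_nt}$, which is continuous in trace norm. So the fundamental theorem of calculus for Bochner integrals in $\mathcal{T}^1$ gives
\begin{align}
    e^{iA_nt}\rho e^{-iA_nt} - \rho = i\int_0^t e^{iA_ns}[A_n,\rho]e^{-iA_ns}\,ds .
\end{align}
The plan is to let $n\to\infty$ on both sides in trace norm, uniformly for $t$ in a compact interval.

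\textbf{Left-hand side.} By the preceding lemma, $e^{iA_nt}\to e^{iAt}$ strongly, uniformly on $[-\tau,\tau]$. The adjoints $e^{-iA_nt}$ converge in the same way, so the convergence is $*$-strong. Lemma \ref{lemma: trace class convergence} then gives $e^{iA_nt}\rho\to e^{iAt}\rho$ in $\mathcal{T}^1$, uniformly in $t$. Writing
\begin{align}
e^{iA_nt}\rho e^{-iA_nt}-e^{iAt}\rho e^{-iAt}=(e^{iA_nt}-e^{iAt})\rho e^{-iA_nt}+e^{iAt}\rho(e^{-iA_nt}-e^{-iAt}),
\end{align}
we bound the first term by $\|(e^{iA_nt}-e^{iAt})\rho\|_1$ using unitarity. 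The second term is handled by the right-multiplication version of Lemma \ref{lemma: trace class convergence}.

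\textbf{Right-hand side.} We first show $[A_n,\rho]\to[A,\rho]$ in trace norm. Let $E_n=\mathbb{I}([-n,n])$ be the spectral projection. Then $A_n\rho = E_nA\rho$, because $\operatorname{ran}\rho\subseteq D(A)$, which follows from $A\rho\in\mathcal{T}^1$ as in the definition. Since $E_n\to 1$ strongly and $E_n$ is self-adjoint, Lemma \ref{lemma: trace class convergence} gives $E_nA\rho\to A\rho$ in $\mathcal{T}^1$. Similarly $\rho A_n=\rho A E_n$ on $D(A)$, whose bounded extension is $\widetilde{\rho A}E_n$, and this converges to $\widetilde{\rho A}$ in $\mathcal{T}^1$. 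We then split the integrand as
\begin{align}
e^{iA_ns}([A_n,\rho]-[A,\rho])e^{-iA_ns}+\bigl(e^{iA_ns}[A,\rho]e^{-iA_ns}-e^{iAs}[A,\rho]e^{-iAs}\bigr).
\end{align}
The first piece tends to zero by unitary invariance of the trace norm. The second tends to zero uniformly in $s$ by the same argument used for the left-hand side, with $[A,\rho]\in\mathcal{T}^1$ in place of $\rho$. Uniform convergence lets us pass the limit through the Bochner integral, which yields the claimed integral identity.

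\textbf{Conclusions.} The integrand $G(s)=e^{iAs}[A,\rho]e^{-iAs}$ is trace-norm continuous, by the same Lemma \ref{lemma: trace class convergence} argument applied to strongly continuous unitary groups. The Bochner fundamental theorem of calculus therefore gives $F\in C^1$ with $F'(t)=iG(t)$. The bound $\|F(t)-\rho\|_1\le|t|\,\|[A,\rho]\|_1$ follows from $\|\int_0^t G\|_1\le\int_0^{|t|}\|G\|_1$ and unitary invariance of the trace norm.

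\textbf{Main obstacle.} The delicate step is the domain bookkeeping for $\rho A_n=\widetilde{\rho A}E_n$. One must check that $\rho A E_n\psi=\widetilde{\rho A}E_n\psi$ for every $\psi\in\mathcal{H}$, which holds since $E_n\psi\in D(A)$. One must also confirm that $A_nE_n=A_n$, so that $\rho A_n$ really equals this product everywhere, not merely on $D(A)$.
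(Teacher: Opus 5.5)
Your proposal is correct and takes essentially the same route as the paper: truncate to $A_n=A\mathbb{I}([-n,n])$, write the integral identity for the bounded generators, and pass to the limit uniformly on compact $t$-intervals, using Lemma~\ref{lemma: trace class convergence} and the uniform strong convergence $e^{itA_n}\to e^{itA}$. Your additional bookkeeping fills in steps the paper leaves implicit: the identities $A_n\rho=E_nA\rho$ and $\rho A_n=\widetilde{\rho A}E_n$, and the trace-norm continuity of $s\mapsto e^{iAs}[A,\rho]e^{-iAs}$, which is what is needed to get $F\in C^1$ back from the integral identity.
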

\begin{proof}
    Since $A\rho, \, \rho A \in \mathcal{T}^1$ and $\mathbb{I}([-n,n])\rightarrow \id$ strongly, Lemma \ref{lemma: trace class convergence} gives
    \begin{align}
        \norm{[A_n,\rho] - [A,\rho]}_1 \to 0.
    \end{align}
    For each $n$, the generator $A_n$ is bounded, so the usual bounded-operator calculus gives a trace norm $C^1$ map $F_n(t) := e^{itA_n}\rho e^{-itA_n}$ with derivative $F_n'(t) = ie^{itA_n }[A_n,\rho]e^{-itA_n}$, and therefore
    \begin{align}
        F_n(t) - \rho = i\int_0^t e^{isA_n}[A_n,\rho]e^{-isA_n}\,ds.
    \end{align}
    Let $\tau>0$. Since $e^{itA_n} \to e^{itA}$ strongly, uniformly for $|t|\leqslant \tau$, we have that
    \begin{gather}
        \sup_{|t| \leqslant \tau} \norm{e^{itA_n}\rho e^{-itA_n} - e^{itA}\rho e^{-itA}}_1 \to 0, \\
        \sup_{|t| \leqslant \tau} \norm{e^{itA_n}[A_n,\rho]e^{-itA_n} - e^{itA}[A,\rho]e^{-itA}}_1 \to 0.
    \end{gather}
    One then has that
    \begin{align}
        e^{itA}\rho e^{-itA} - \rho = i\int_0^t e^{isA}[A,\rho]e^{-isA}\,ds,
    \end{align}
    with which one gets the desired bound.  
\end{proof}
Now, in the context of our physical system, we get the following bound.
\begin{corollary} \label{cor: commutator bound}
    Let $\Gamma \subset \mathbb{Z}^d$ be finite, $\rho_\Gamma$ be a density operator on $\mathcal{H}_\Gamma$, and let $\xi \in \IR^{2|\Gamma|}$ satisfy $\norm{\xi} = 1$. Assume that $R_\Gamma(\xi)\rho_\Gamma \in \mathcal{T}^1(\mathcal{H}_\Gamma)$ and $\rho_\Gamma R_\Gamma(\xi) \in \mathcal{T}^1(\mathcal{H}_\Gamma)$. Then for every $t \in \IR$,
    \begin{align}
        \norm{W_\Gamma(t\xi)\rho_\Gamma W_\Gamma(t\xi)^* - \rho_\Gamma}_1 \leqslant |t|\,\norm{[R_\Gamma(\xi),\rho_\Gamma]}_1.
    \end{align}
\end{corollary}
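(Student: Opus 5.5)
This corollary is a direct specialization of the preceding Proposition. We apply it with the self-adjoint operator $A=R_\Gamma(\xi)$ and the density operator $\rho=\rho_\Gamma$ on $\CH_\Gamma$.

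The first step is to check that $R_\Gamma(\xi)=\sum_{x\in\Gamma}(u(x)\hat p_x+v(x)\hat q_x)$ is (essentially) self-adjoint. On the finite tensor product of Schr\"odinger representations it is essentially self-adjoint on a common core, for instance the Schwartz functions on $\IR^{|\Gamma|}$. We take $R_\Gamma(\xi)$ to denote its closure. Moreover, $e^{itR_\Gamma(\xi)}=W_\Gamma(t\xi)$ for all $t\in\IR$: the family $t\mapsto W_\Gamma(t\xi)$ is a strongly continuous one-parameter unitary group, by the Weyl relations (since $\sigma_\Gamma(t\xi,s\xi)=0$) and regularity of the representation, and its Stone generator is $R_\Gamma(\xi)$ by the identification $\pi(W(u,v))=e^{i(u\hat p+v\hat q)}$ stated in Section~\ref{section 3}, applied factorwise. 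Here the factors commute on different sites.

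The second step is to meet the hypotheses on the commutator. The trace-class assumptions $R_\Gamma(\xi)\rho_\Gamma\in\CT^1$ and $\rho_\Gamma R_\Gamma(\xi)\in\CT^1$ are exactly the hypotheses $A\rho,\rho A\in\CT^1$ of the Proposition. In the sense of the Definition of $[A,\rho]$, the second is understood as the bounded trace-class extension $\widetilde{\rho A}$. Hence $[R_\Gamma(\xi),\rho_\Gamma]\in\CT^1$ is well defined, and the Proposition gives
\begin{align}
\norm{e^{itR_\Gamma(\xi)}\rho_\Gamma e^{-itR_\Gamma(\xi)}-\rho_\Gamma}_1\leqslant |t|\,\norm{[R_\Gamma(\xi),\rho_\Gamma]}_1 .
\end{align}
Substituting $e^{itR_\Gamma(\xi)}=W_\Gamma(t\xi)$ and $e^{-itR_\Gamma(\xi)}=W_\Gamma(t\xi)^*$ gives the claim. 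The normalization $\norm{\xi}=1$ plays no role in the inequality itself; it only fixes the parametrization by $|t|=\norm{t\xi}$ that will be used later with $\Omega_{\rho_\Gamma}$.

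The only point needing care is the identification of the generator, namely that $R_\Gamma(\xi)$, defined as a sum of unbounded operators, is self-adjoint with $e^{itR_\Gamma(\xi)}=W_\Gamma(t\xi)$. I would handle this by a unitary change of variables. A metaplectic/symplectic rotation on $\IR^{2|\Gamma|}$ maps $\xi/\norm{\xi}$ to a unit vector in a single coordinate direction. Under the corresponding unitary implementer, $R_\Gamma(\xi)$ is conjugated to a single $\hat p_{x_0}$ (up to scaling), for which self-adjointness and Stone's theorem are standard. Alternatively, one can argue directly via Stone's theorem applied to the group $t\mapsto W_\Gamma(t\xi)$, whose generator agrees with $R_\Gamma(\xi)$ on the Schwartz core.
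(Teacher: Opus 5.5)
Your proposal is correct and matches the paper's proof: the paper also just applies the preceding Proposition with $A=R_\Gamma(\xi)$, $\rho=\rho_\Gamma$ and uses $e^{itR_\Gamma(\xi)}=W_\Gamma(t\xi)$. Your extra justification of self-adjointness and of the Stone generator (via a metaplectic rotation or Stone's theorem on a Schwartz core) covers a point the paper takes for granted when it defines $W_\Gamma(\xi):=e^{iR_\Gamma(\xi)}$.
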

\begin{proof}
    Apply previous proposition to $R_\Gamma(\xi)$ and $W_\Gamma$ and the claim follows. 
\end{proof}
\subsection{Local number operators}\label{subsection 4.3}
To achieve our many-body Riemann--Lebesgue lemma, we control Weyl operator
averages by particle number in the finite-volume Schr\"odinger representations.
Although $\CA_\Gamma$ is the algebraic Weyl $*$-algebra, its locally normal
representatives act on $\CH_\Gamma$, where we have the usual unbounded ladder
operators
\begin{align}
    a_x := \frac{\hat{q}_x + i\hat{p}_x}{\sqrt{2}},
  \qquad
  N_\Gamma := \sum_{x \in \Gamma} a_x^* a_x.
\end{align}
The operator $N_\Gamma$ is not an element of $\CA_\Gamma$; it is an unbounded
operator affiliated with the Schr\"odinger representation and is used only to
impose the finite-density moment condition on locally normal states.

\begin{definition}
    We define the core $\mathcal{D}_{fin, \Gamma}$ of the local number operator $N_\Gamma$, as the finite span of number eigenvectors of the family $\{a^*_xa_x\}_{x\in \Gamma}$.
    \begin{align}
        \CD_{fin,\Gamma}\coloneqq \operatorname{span}_{\IC}\{\psi \in \CH_{\Gamma}\,|\forall x \in \Gamma, \;\,a_x^*a_x \psi=n_x \psi\,\quad \mbox{for some}\,n_x\in \IN\}\,.
    \end{align}
\end{definition}
It is a subset of $D(N_\Gamma)$ such that its graph closure is the graph over the domain, that is, $\overline{\{(x, N_\Gamma x)\,| \,x\in \mathcal{D}_{fin, \Gamma} \}}= \{(x, N_\Gamma x)\,|\, x\in D(N_\Gamma)\}$. Working over the core allows one to avoid certain technical details with respect to the domains of unbounded operators in our next proposition, as it is closed under any finite string of ladder operators. 

\begin{proposition} \label{prop: quadratic forms bound}
    Let $\Gamma \subset \IZ^d$ be finite and let $\xi \in \IR^{2|\Gamma|}$ satisfy $\norm{\xi} = 1$. Then  the following holds:
    \begin{enumerate}[label = (\roman*)]
        \item On $\mathcal{D}_{fin,\Gamma}$ one has
    \begin{align}
        R_\Gamma(\xi)^2 \leqslant 2N_\Gamma + 1
    \end{align}
    as quadratic forms.
        \item $D(N_\Gamma^{1/2}) \subseteq D(R_\Gamma(\xi))$ and for every $\psi \in D(N_\Gamma^{1/2})$,
    \begin{align}
         \norm{R_\Gamma(\xi)\psi}^2 \leqslant 2\langle\psi|N_\Gamma\psi\rangle + \norm{\psi}^2,
    \end{align}
    and so for $\rho_\Gamma$ a density operator with $\omega_{\rho_\Gamma}(N_\Gamma)<\infty$,
    \begin{align}
        \omega_{\rho_\Gamma}(R_\Gamma(\xi)^2)\leqslant 2\omega_{\rho_\Gamma}(N_\Gamma)+1.
    \end{align}
    \end{enumerate} 
\end{proposition}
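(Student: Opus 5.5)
For part (i) I will compute on the core $\CD_{fin,\Gamma}$ in terms of ladder operators. Write $\hat q_x=(a_x+a_x^*)/\sqrt2$ and $\hat p_x=(a_x-a_x^*)/(i\sqrt2)$, so that
\begin{align}
    R_\Gamma(\xi)=\sum_{x\in\Gamma}\bigl(\bar z_x a_x+z_x a_x^*\bigr)=:B+B^*,
    \qquad
    z_x:=\frac{v(x)+iu(x)}{\sqrt2},
    \qquad
    B:=\sum_{x\in\Gamma}\bar z_x a_x .
\end{align}
Here $\sum_x|z_x|^2=\tfrac12\norm{\xi}^2=\tfrac12$. Since $\CD_{fin,\Gamma}$ is invariant under all ladder operators, all products below are well defined on it.

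Expanding gives $R^2=B^2+(B^*)^2+BB^*+B^*B$. Using $[B,B^*]=\sum|z_x|^2=\tfrac12$, we get $BB^*+B^*B=2B^*B+\tfrac12$. The cross terms are handled by $(B-B^*)^*(B-B^*)\geqslant0$, i.e.
\begin{align}
    B^2+(B^*)^2\leqslant BB^*+B^*B .
\end{align}
Hence $R^2\leqslant 2(BB^*+B^*B)=4B^*B+1$. Next, Cauchy--Schwarz on the form gives
\begin{align}
    \norm{B\psi}^2\leqslant\Bigl(\sum_x|z_x|^2\Bigr)\sum_x\norm{a_x\psi}^2=\tfrac12\langle\psi,N_\Gamma\psi\rangle .
\end{align}
Combining, $R^2\leqslant 2N_\Gamma+1$ on $\CD_{fin,\Gamma}$, which is (i). (One could also get this directly by diagonalising: $R/\norm{\xi}$ is, after a passive symplectic-orthogonal rotation, a single quadrature $\sqrt2\,\hat q$ of a mode $b$ with $b^*b\leqslant N_\Gamma$. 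The computation above avoids constructing that unitary.)

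For part (ii), the form inequality in (i) reads $\norm{R\psi}^2\leqslant 2\norm{N_\Gamma^{1/2}\psi}^2+\norm{\psi}^2$ for $\psi\in\CD_{fin,\Gamma}$. I will extend it by closure. The space $\CD_{fin,\Gamma}$ is spanned by eigenvectors of $N_\Gamma$, so it is a core for $N_\Gamma^{1/2}$. Given $\psi\in D(N_\Gamma^{1/2})$, choose $\psi_n\in\CD_{fin,\Gamma}$ with $\psi_n\to\psi$ in the graph norm of $N_\Gamma^{1/2}$. The inequality then shows $R\psi_n$ is Cauchy. Because $R_\Gamma(\xi)$ is self-adjoint, it is closed, so $\psi\in D(R)$ and $R\psi_n\to R\psi$, and the inequality passes to the limit.

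Here I need $\CD_{fin,\Gamma}\subset D(R_\Gamma(\xi))$ with $R_\Gamma(\xi)$ agreeing there with the ladder expression. This is the main technical point I expect to need care. I would justify it by noting that the ladder expression is essentially self-adjoint on $\CD_{fin,\Gamma}$, since finite-particle vectors are analytic vectors by the bound just proved. Its closure is then the Stone generator of $t\mapsto W_\Gamma(t\xi)$ on $\CH_\Gamma$. To confirm this, I would check that on Hermite functions $\tfrac{d}{dt}W_\Gamma(t\xi)\psi|_{t=0}=i(\sum u\hat p+v\hat q)\psi$ and invoke Nelson's analytic vector theorem.

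The expectation bound then follows from the Corollary after Proposition-Definition \ref{prop: expectation of positive unbounded op}. Take $A_1=R_\Gamma(\xi)^2$ and $A_2=2N_\Gamma+1$. Then $D(A_1^{1/2})=D(R)\supseteq D(N_\Gamma^{1/2})=D(A_2^{1/2})$, and the hypothesis $\norm{A_1^{1/2}\psi}^2\leqslant\norm{A_2^{1/2}\psi}^2$ is exactly the extended inequality. Therefore, whenever $\omega_{\rho_\Gamma}(N_\Gamma)<\infty$,
\begin{align}
    \omega_{\rho_\Gamma}\bigl(R_\Gamma(\xi)^2\bigr)\leqslant\omega_{\rho_\Gamma}(2N_\Gamma+1)=2\,\omega_{\rho_\Gamma}(N_\Gamma)+1 .
\end{align}
The last equality holds because both sides equal $\sum_j\lambda_j(2\norm{N_\Gamma^{1/2}\psi_j}^2+1)$ by Proposition-Definition \ref{prop: expectation of positive unbounded op}.
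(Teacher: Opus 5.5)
Your proof is correct and follows the paper's argument essentially step by step. For (i) you use the composite annihilation operator (your $B$ is the paper's $b/\sqrt2$), positivity of $(B-B^*)^*(B-B^*)$, and Cauchy--Schwarz; for (ii) you use closedness of $R_\Gamma(\xi)$ to extend the form bound from $\CD_{fin,\Gamma}$ to $D(N_\Gamma^{1/2})$, then term-by-term comparison over the spectral decomposition of $\rho_\Gamma$. Your extra justification that $\CD_{fin,\Gamma}\subset D(R_\Gamma(\xi))$, with $R_\Gamma(\xi)$ acting there as the ladder expression, fills a point the paper asserts without comment; for that inclusion, differentiating $t\mapsto W_\Gamma(t\xi)\psi$ on Hermite functions and invoking Stone's theorem already suffices, so Nelson's theorem is more than you need.
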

\begin{proof}
    \mbox{}
    \begin{enumerate}[label = (\roman*)]
        \item Write $\xi = (u, v)$ and define $\gamma_x := v_x - iu_x$ such that $\sum_{x \in \Gamma}|\gamma_x|^2 = 1$. On $\mathcal{D}_{fin,\Gamma}$ we have
    \begin{align}
        R_\Gamma(\xi) = \frac{1}{\sqrt{2}}\sum_{x \in \Gamma}(\gamma_x a_x + \overline{\gamma}_x a_x^*).
    \end{align}
    Define the composite annihilation operator,
    \begin{align}
        b = \sum_{x\in \Gamma} \gamma_xa_x,
    \end{align}
    we have $[b,b^*]=1$ and on $\mathcal{D}_{fin, \Gamma}$,
    \begin{align}
        R_\Gamma(\xi)= \frac{b+b^*}{\sqrt{2}}.
    \end{align}
    For $\psi \in \mathcal{D}_{fin, \Gamma}$, 
    \begin{align}
        \norm{b\psi}^2 = \Bigl\|\sum_{x \in \Gamma}\gamma_x a_x\psi\Bigr\|^2
  \leqslant \sum_{x \in \Gamma}|\gamma_x|^2 \cdot \sum_{x \in \Gamma}\norm{a_x\psi}^2
  = \langle\psi|N_\Gamma\psi\rangle,
    \end{align}
    so $b^*b \leqslant N_\Gamma$ as quadratic forms on $\mathcal{D}_{fin,\Gamma}$. We have
    \begin{align}
         0 \leqslant (b -& b^*)^*(b - b^*) = bb^* + b^*b - b^2 - b^{*2}\nonumber \\
        & \implies b^2 + b^{*2} \leqslant bb^* + b^*b
    \end{align}
    which we can use to say that
    \begin{align}
        (b + b^*)^2 &= b^2 + b^{*2} + bb^* + b^*b \leqslant 2(bb^* + b^*b)= 2(2b^*b + 1)\leqslant 2(2N_\Gamma + 1) \nonumber \\
        \implies \;\;\;\;\;\;& R_\Gamma(\xi)^2 \leqslant 2N_\Gamma + 1
    \end{align}
    as quadratic forms on $\mathcal{D}_{fin,\Gamma}$. \\

    \item For $\psi \in D(N_\Gamma^{1/2})$ and $P_{N_\Gamma,[0,n]}$ the spectral projector onto $[0,n]$, we have that $P_{N_\Gamma,[0,n]}\psi \in \mathcal{D}_{fin,\Gamma}$ with 
    \begin{align}
        P_{N_\Gamma,[0,n]}\psi \overset{n \rightarrow \infty}{\longrightarrow} \psi, \quad N_\Gamma ^{1/2}P_{N_\Gamma,[0,n]}\psi \overset{n \rightarrow \infty}{\longrightarrow} N_{\Gamma} ^{1/2}\psi.
    \end{align}
    Applying the quadratic form inequality from above,
    \begin{align}
        \norm{R_\Gamma(\xi)\left(P_{N_\Gamma,[0,n]}-P_{N_\Gamma,[0,m]}\right)\psi}^2
        &\leqslant
        2\left\langle
          \left(P_{N_\Gamma,[0,n]}-P_{N_\Gamma,[0,m]}\right)\psi,
          N_\Gamma\left(P_{N_\Gamma,[0,n]}-P_{N_\Gamma,[0,m]}\right)\psi
        \right\rangle \nonumber\\
        &\qquad+
        \norm{\left(P_{N_\Gamma,[0,n]}-P_{N_\Gamma,[0,m]}\right)\psi}^2,
    \end{align}
    which goes to zero as $m,n \rightarrow \infty$, so $R_\Gamma(\xi)P_{N_\Gamma,[0,n]}\psi$ is Cauchy in $\mathcal{H}_\Gamma$. Since $R_\Gamma(\xi)$ is self-adjoint, it is closed, and thus $\psi \in D(R_\Gamma(\xi))$ with $R_\Gamma(\xi)P_{N_\Gamma,[0,n]}\psi \overset{n \rightarrow \infty}{\longrightarrow} R_\Gamma(\xi)\psi$. Therefore, for $\psi \in D(N_\Gamma^{1/2})$,
    \begin{align}
        \norm{R_\Gamma(\xi)\psi}^2 \leqslant 2\langle \psi| N_\Gamma \psi\rangle + \norm{\psi}^2. 
    \end{align}
    Lastly, we consider the spectral decomposition $\rho_\Gamma =\sum_j \lambda_j \phi_j\phi_j^*$. Since $\omega_{\rho_\Gamma}(N_\Gamma)< \infty$, Proposition-Definition \ref{prop: expectation of positive unbounded op} implies that for every $\lambda_j>0,$ $\phi_j \in D(N_\Gamma^{1/2})$ and thus
    \begin{align}
        \omega_{\rho_\Gamma}(R_\Gamma(\xi)^2)\leqslant 2\omega_{\rho_\Gamma}(N_\Gamma)+1.
    \end{align}
    \end{enumerate}
\end{proof}
\subsection{Finite second moments imply trace-class commutators}\label{subsection 4.4}
Proposition \ref{prop: quadratic forms bound} establishes an upper bound on the second moment of our Weyl generators. We now show that finite second moments imply trace-class commutators.
\begin{proposition} \label{prop: finite second moments}
    Let $\rho $ be a density operator and $A$ be self adjoint on $\mathcal{H}$. If $\omega_\rho(A^2) < \infty$, then $A\rho \in \mathcal{T}^1$ and $\rho A \in \mathcal{T}^1$. Moreover,
    \begin{align}
        \norm{A\rho}_1 \leqslant \sqrt{\omega_\rho(A^2)}, \qquad \norm{\rho A}_1 \leqslant \sqrt{\omega_\rho(A^2)},
    \end{align}
    and thus
    \begin{align}
        \norm{[A,\rho]}_1 \leqslant 2\sqrt{\omega_\rho(A^2)}.
    \end{align}
\end{proposition}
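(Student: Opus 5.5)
The plan is to work in the spectral decomposition $\rho=\sum_j\lambda_j\psi_j\psi_j^*$, with $\lambda_j\geqslant 0$ and $\sum_j\lambda_j=1$. Since $\omega_\rho(A^2)<\infty$, Proposition-Definition~\ref{prop: expectation of positive unbounded op}, applied to the positive self-adjoint operator $A^2$ with $(A^2)^{1/2}=|A|$, shows that every $\psi_j$ with $\lambda_j>0$ lies in $D(|A|)=D(A)$. It also gives
\begin{align}
\omega_\rho(A^2)=\sum_j\lambda_j\norm{A\psi_j}^2 .
\end{align}
Hence $\operatorname{ran}(\rho)\subseteq D(A)$ holds at least on finite combinations, and the candidate for $A\rho$ is the operator
\begin{align}
A\rho=\sum_j\lambda_j (A\psi_j)\psi_j^* .
\end{align}

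The key estimate is the rank-one trace-norm bound $\norm{\phi\psi^*}_1=\norm{\phi}\norm{\psi}$, combined with Cauchy--Schwarz. This gives
\begin{align}
\sum_j\lambda_j\norm{A\psi_j}\leqslant\Bigl(\sum_j\lambda_j\Bigr)^{1/2}\Bigl(\sum_j\lambda_j\norm{A\psi_j}^2\Bigr)^{1/2}=\sqrt{\omega_\rho(A^2)} .
\end{align}
So the series converges absolutely in $\CT^1$, and its sum has trace norm at most $\sqrt{\omega_\rho(A^2)}$.

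It then remains to identify this sum with $A\rho$ on all of $\CH$. For $\phi\in\CH$, the series $\rho\phi=\sum_j\lambda_j\langle\psi_j,\phi\rangle\psi_j$ has partial sums $S_n\phi\in D(A)$. The images $AS_n\phi$ converge, because the trace-norm convergence above dominates operator-norm convergence. Since $A$ is closed, $\rho\phi\in D(A)$ and $A\rho\phi$ equals the series. Thus $\operatorname{ran}\rho\subseteq D(A)$ and $\norm{A\rho}_1\leqslant\sqrt{\omega_\rho(A^2)}$. This closedness step is the only point requiring care.

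For $\rho A$, I define $\widetilde{\rho A}:=(A\rho)^*=\sum_j\lambda_j\psi_j(A\psi_j)^*$. It is trace class with the same trace norm. On $D(A)$ it agrees with $\rho A$ by self-adjointness: $\langle A\psi_j,\phi\rangle=\langle\psi_j,A\phi\rangle$. This makes it the continuous extension required by the definition of $[A,\rho]$. The commutator bound then follows from the triangle inequality.
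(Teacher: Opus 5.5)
Your proof is correct and follows the paper's own route. You take the spectral decomposition of $\rho$, apply Proposition-Definition~\ref{prop: expectation of positive unbounded op} to $A^2$ to place each $\psi_j$ with $\lambda_j>0$ in $D(A)$, and use Cauchy--Schwarz on the rank-one trace norms $\norm{(A\psi_j)\psi_j^*}_1=\norm{A\psi_j}$. You also supply two steps the paper only asserts: the closedness argument giving $\operatorname{ran}\rho\subseteq D(A)$, and the identification of $\widetilde{\rho A}$ with $(A\rho)^*$ via symmetry of $A$. With the sums restricted to $\lambda_j>0$, your version is the more complete one.
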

\begin{proof}
    Let $\rho = \sum_n \lambda_n \phi_n \phi^*_n$ be a spectral decomposition. By convention in \ref{prop: expectation of positive unbounded op}, $\omega(A^2)<\infty$ implies that for all $n$ such that $\lambda_n >0$, $\phi_n \in D(A)$. Consequently, for all $\psi \in \mathcal{H}$,
    \begin{align}
        \rho \psi = \sum_n \lambda_n \langle\phi_n | \psi\rangle \phi_n
    \end{align}
    and so, the image of $\rho$ is contained in $D(A)$, and so $A\rho \in \mathcal{T}^1$. We then have
    \begin{align}
        \sum_{n \geq 1}\lambda_n \norm{A\phi_n\phi^*_n}_1 = \sum_{n}\lambda_n\norm{A\phi_n} \leqslant \Bigl(\sum_{n}\lambda_n\Bigr)^{1/2} \Bigl(\sum_{n}\lambda_n\norm{A\phi_n}^2\Bigr)^{1/2} = \sqrt{\omega_\rho(A^2)}.
    \end{align}
    It follows from the spectral decomposition of $\rho$ that $\rho A \in \mathcal{T}^1$, and by the same argument as above, 
    \begin{align}
        \norm{\rho A}_1 \leqslant \sqrt{\omega_\rho(A^2)}.
    \end{align}
    The final bound follows immediately from triangle inequality.
\end{proof}
\begin{corollary}\label{cor:number-commutator-bound}
    Let $\Gamma \subset \IZ^d$ be finite, $\rho_\Gamma$ be a density operator on $\mathcal{H}_\Gamma$, and $\xi \in \IR^{2|\Gamma|}$ satisfy $\norm{\xi} = 1$. If $\omega_{\rho_\Gamma}(N_\Gamma) < \infty$, then
    \begin{align}
        R_\Gamma(\xi)\rho_\Gamma \in \mathcal{T}^1(\mathcal{H}_\Gamma), \qquad \rho_\Gamma R_\Gamma(\xi) \in \mathcal{T}^1(\mathcal{H}_\Gamma),
    \end{align}
    and
    \begin{align}
        \norm{[R_\Gamma(\xi), \rho_\Gamma]}_1 \leqslant 2\sqrt{2\omega_{\rho_\Gamma}(N_\Gamma)+1}.
    \end{align}
\end{corollary}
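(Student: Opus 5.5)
The corollary is a direct chaining of Proposition~\ref{prop: quadratic forms bound} and Proposition~\ref{prop: finite second moments}, applied with $A=R_\Gamma(\xi)$ and $\rho=\rho_\Gamma$. The plan has three steps.

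First, I need to know that $A$ qualifies as an input to Proposition~\ref{prop: finite second moments}. For fixed $\xi$ with $\norm{\xi}=1$, $R_\Gamma(\xi)$ is self-adjoint. It is the Stone generator of the strongly continuous unitary group $t\mapsto W_\Gamma(t\xi)$, and this group property follows from the Weyl relations, since $\sigma_\Gamma(t\xi,s\xi)=0$.

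Second, I will verify the second-moment hypothesis $\omega_{\rho_\Gamma}(R_\Gamma(\xi)^2)<\infty$. This comes from part (ii) of Proposition~\ref{prop: quadratic forms bound}. Since $\omega_{\rho_\Gamma}(N_\Gamma)<\infty$, every eigenvector $\phi_j$ of $\rho_\Gamma$ with $\lambda_j>0$ lies in $D(N_\Gamma^{1/2})\subseteq D(R_\Gamma(\xi))$. Moreover $\norm{R_\Gamma(\xi)\phi_j}^2\leqslant 2\langle\phi_j|N_\Gamma\phi_j\rangle+1$.

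I must check that this matches the definition of $\omega_{\rho_\Gamma}(R_\Gamma(\xi)^2)$ through Proposition-Definition~\ref{prop: expectation of positive unbounded op}. Take the positive self-adjoint operator $R_\Gamma(\xi)^2$. Its square root is $|R_\Gamma(\xi)|$, and $\norm{|R_\Gamma(\xi)|\psi}=\norm{R_\Gamma(\xi)\psi}$ on $D(R_\Gamma(\xi))$ by the spectral theorem. Summing the bound with weights $\lambda_j$ then gives
\begin{align}
\omega_{\rho_\Gamma}(R_\Gamma(\xi)^2)\leqslant 2\omega_{\rho_\Gamma}(N_\Gamma)+1<\infty .
\end{align}

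Third, I apply Proposition~\ref{prop: finite second moments}. This yields $R_\Gamma(\xi)\rho_\Gamma\in\CT^1$ and $\rho_\Gamma R_\Gamma(\xi)\in\CT^1$, the latter understood as the bounded extension required in the definition of the commutator. It also gives
\begin{align}
\norm{[R_\Gamma(\xi),\rho_\Gamma]}_1\leqslant 2\sqrt{\omega_{\rho_\Gamma}(R_\Gamma(\xi)^2)}\leqslant 2\sqrt{2\omega_{\rho_\Gamma}(N_\Gamma)+1},
\end{align}
using that the square root is monotone.

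The only mildly delicate point is making sure $\rho_\Gamma R_\Gamma(\xi)$ really admits the trace-class extension required by the commutator definition. The natural candidate is $\sum_j\lambda_j\phi_j(R_\Gamma(\xi)\phi_j)^*$. This sum converges absolutely in trace norm by the same Cauchy--Schwarz estimate as for $A\rho$. It agrees with $\rho_\Gamma R_\Gamma(\xi)$ on $D(R_\Gamma(\xi))$ by self-adjointness, since $\langle\phi_j,R\psi\rangle=\langle R\phi_j,\psi\rangle$. That settles the point, and the rest is routine.
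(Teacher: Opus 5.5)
Your proposal is correct and matches the paper's proof. The paper likewise chains Proposition~\ref{prop: quadratic forms bound}(ii), which gives $\omega_{\rho_\Gamma}(R_\Gamma(\xi)^2)\leqslant 2\omega_{\rho_\Gamma}(N_\Gamma)+1$, with Proposition~\ref{prop: finite second moments}; your extra checks (self-adjointness of $R_\Gamma(\xi)$ as a Stone generator, and the explicit trace-class extension $\sum_j\lambda_j\phi_j(R_\Gamma(\xi)\phi_j)^*$ of $\rho_\Gamma R_\Gamma(\xi)$) only spell out details the paper leaves implicit.
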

\begin{proof}
    Applying \ref{prop: finite second moments} to $R_\Gamma(v)$, followed by applying Proposition \ref{prop: quadratic forms bound} yields the desired result.
\end{proof}
\subsection{Weyl operator decay}\label{subsection 4.5}
Equipped with the previous results, we are now prepared to bound the norm of Weyl operator averages on locally normal states by local particle number.
\begin{corollary}\label{cor: bounding by commutator}
    Under the assumptions of Theorem \ref{thm: Weyl average bound}, assume that for each $k$ and $\xi\in \IR^{2|\Gamma_k|}$, the commutator $[R_{\Gamma_k}(\xi), \rho_{\Gamma_k}]$ is well defined as above, then
    \begin{align}
        |\omega(W(\xi_k))| \leqslant \frac{\pi}{2}\sup_{\norm{\xi}=1}\frac{\norm{[R_{\Gamma_k}(\xi), \rho_{\Gamma_k}]}_1}{\norm{\xi_k}}.
    \end{align}
\end{corollary}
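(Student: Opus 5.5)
The plan is to combine Lemma~\ref{thm: Weyl average bound} with Corollary~\ref{cor: commutator bound}, using the scaling structure of the sign-flipping label. Fix $k$ with $\xi_k\neq 0$, put $\Gamma=\Gamma_k$, $\rho=\rho_{\Gamma_k}$, and let $s_k=s_{\xi_k}=\pi J\xi_k/\norm{\xi_k}^2$. Rather than bound the whole modulus $\Omega_{\rho}(\pi/\norm{\xi_k})$, I will go back to the proof of Lemma~\ref{thm: Weyl average bound}. That proof gives directly
\begin{align}
    |\omega(W(\xi_k))|\leqslant \tfrac12\norm{\rho - W(s_k)\rho W(s_k)^*}_1 .
\end{align}
Note that the supremum over the ball in $\Omega$ can also be handled the same way, since the estimate below is uniform in the direction.

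Next I write $s_k = t\,\eta$ with $\eta = s_k/\norm{s_k}$ a unit vector in $\IR^{2|\Gamma|}$ and $t=\norm{s_k}=\pi/\norm{\xi_k}$, using that $J$ is orthogonal. By hypothesis the commutator $[R_\Gamma(\eta),\rho]$ is well defined in the sense of the preceding definition. In particular $R_\Gamma(\eta)\rho$ and (the extension of) $\rho R_\Gamma(\eta)$ are trace class, which is what Corollary~\ref{cor: commutator bound} needs. That corollary then yields
\begin{align}
    \norm{W_\Gamma(t\eta)\rho W_\Gamma(t\eta)^*-\rho}_1\leqslant |t|\,\norm{[R_\Gamma(\eta),\rho]}_1 .
\end{align}

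Combining the two displays gives
\begin{align}
    |\omega(W(\xi_k))|\leqslant \frac{\pi}{2}\,\frac{\norm{[R_\Gamma(\eta),\rho]}_1}{\norm{\xi_k}}
    \leqslant \frac{\pi}{2}\sup_{\norm{\xi}=1}\frac{\norm{[R_{\Gamma_k}(\xi),\rho_{\Gamma_k}]}_1}{\norm{\xi_k}} .
\end{align}
This is the claim.

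The only delicate point is the hypothesis check for Corollary~\ref{cor: commutator bound}: the preceding proposition was stated assuming $A\rho,\rho A\in\CT^1$. The well-definedness assumption supplies exactly this, with $\rho A$ read as the unique bounded extension. So the main work is to make sure the Bochner-integral identity applies with that extension. I expect no further obstacle, since everything else is the orthogonality of $J$ and the identity $\norm{s_k}=\pi/\norm{\xi_k}$.
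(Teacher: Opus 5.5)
Your proof is correct and follows the paper's route: you combine the sign-flip estimate from Lemma~\ref{thm: Weyl average bound} with the Duhamel bound of Corollary~\ref{cor: commutator bound} at $t=\pi/\norm{\xi_k}$. The only difference is that you apply the commutator bound to the single direction $\eta=J\xi_k/\norm{\xi_k}$, rather than first bounding the full modulus $\Omega_{\rho_{\Gamma_k}}(\pi/\norm{\xi_k})$ over the ball; this gives a marginally sharper intermediate estimate but the same final statement.
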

\begin{proof}
    By Corollary \ref{cor: commutator bound}, we have
    \begin{align}
        \norm{W_\Gamma(t\xi)\rho_\Gamma W_\Gamma(t\xi)^* - \rho_\Gamma}_1 \leqslant |t| \sup_{\norm{\xi}=1} \norm{[R_\Gamma(\xi),\rho_\Gamma]}_1.
    \end{align}
    Now, take the supremum over $t\leqslant \frac{\pi}{\norm{\xi_k}}$ and apply Theorem \ref{thm: Weyl average bound}, and the claim follows. 
\end{proof}

We are now equipped to prove the main Riemann-Lebesgue estimate:
\begin{proof}[Proof of Theorem \ref{thm:rl-intro}]
    Apply Corollary \ref{cor: bounding by commutator} and then Corollary \ref{cor:number-commutator-bound} on the region $\Gamma$. The finite-density estimate $\omega(N_\Gamma)\leqslant \nu|\Gamma|$ gives the desired result.
\end{proof}
\begin{remark}
Let us consider a simple example of locally-normal state without uniform particle-density bound.
Let $\{\varphi_n\}_{n\geqslant0}$ be the one-site
number basis, and choose $n_x=1+|x|^2$. The  family
 \begin{align}
        \rho_\Gamma
        =
        \bigotimes_{x\in\Gamma}
        |\varphi_{n_x}\rangle\langle \varphi_{n_x}|
   \end{align}
defines a locally normal state on $\mathcal A_{\mathrm{loc}}$. On the other hand, $\omega_{\rho_\Gamma}(N_\Gamma)=\sum_{x\in\Gamma}n_x.$
Therefore,
 \begin{align}
        \sup_{\Gamma\Subset\Lambda}
        \frac{\omega_{\rho_\Gamma}(N_\Gamma)}{|\Gamma|}
        \geqslant
        \sup_{x\in\Lambda} n_x=\infty.
   \end{align}
\end{remark}

\begin{corollary}\label{cor: gqca bounds to zero}
    Let $\alpha$ be a Gaussian QCA with induced phase-space map $\Phi$, let $\omega$ be a locally normal state, and 
    \begin{align}
        \xi_0 \in V\setminus\{0\},
  \quad \xi_k := \Phi^k \xi_0,
  \quad \Gamma_k := \mathrm{supp}(\xi_k).
    \end{align}
    Assume that there exists a $\nu$ such that, for every finite $\Gamma \subset \IZ^d$,
    \begin{align}
        \omega_{\rho_\Gamma}(N_\Gamma) \leqslant \nu |\Gamma|,
    \end{align}
    and that there are constants $c>0, \lambda>1$ and $k_0 \in \IN$ such that, for $k \geq k_0$,
    \begin{align}
        \norm{\xi_k} \geqslant c\lambda^k.
    \end{align}
    Then,
    \begin{align}
        |\omega(W(\xi_k))| \leqslant \frac{\pi}{c\lambda^k}\sqrt{2\nu C(1+k)^d + 1}
    \end{align}
    for some $C>0$.
\end{corollary}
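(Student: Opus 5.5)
The plan is to combine the many-body Riemann--Lebesgue estimate (Theorem \ref{thm:rl-intro}) with a polynomial bound on the growth of the support $\Gamma_k=\supp(\xi_k)$ coming from finite spread. Since $\xi_0\neq 0$ and $\Phi$ is invertible on $V$, every $\xi_k$ is nonzero. Theorem \ref{thm:rl-intro} applied with label $\xi_k$ and region $\Gamma_k$ therefore gives
\begin{align}
    \abs{\omega(W(\xi_k))}\leqslant \frac{\pi\sqrt{2\nu\abs{\Gamma_k}+1}}{\norm{\xi_k}_2}.
\end{align}
If one wants to avoid quoting the theorem, the same bound follows directly from Corollary \ref{cor: bounding by commutator} and Corollary \ref{cor:number-commutator-bound}, using the density hypothesis $\omega_{\rho_{\Gamma_k}}(N_{\Gamma_k})\leqslant\nu\abs{\Gamma_k}$. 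This hypothesis also makes every commutator $[R_{\Gamma_k}(\xi),\rho_{\Gamma_k}]$ well defined and trace class, which is what Corollary \ref{cor: bounding by commutator} requires.

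The second step is the support estimate. By finite spread, $\supp(\xi_k)\subset\supp(\xi_0)+kS$, where $kS=S+\dots+S$ ($k$ times). Choose $R\in\IN$ with $S\subset[-R,R]^d$. Then $kS\subset[-kR,kR]^d$, so
\begin{align}
    \abs{\Gamma_k}\leqslant\abs{\supp\xi_0}\,(2kR+1)^d\leqslant\abs{\supp\xi_0}\,(2R+1)^d(1+k)^d.
\end{align}
Hence one may take $C=\abs{\supp\xi_0}(2R+1)^d$, which depends only on $\xi_0$ and $S$.

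Finally, for $k\geqslant k_0$ I will substitute $\norm{\xi_k}\geqslant c\lambda^k$ into the denominator, which yields the claimed inequality. The only delicate point is the range $k<k_0$, where the growth hypothesis is not assumed. I will read the statement as holding for $k\geqslant k_0$. If it is needed for all $k$, I will shrink $c$ to $\min\bigl(c,\min_{k<k_0}\norm{\xi_k}\lambda^{-k}\bigr)>0$, which is possible because every $\xi_k\neq0$. None of these steps is a real obstacle. The substantive analytic content sits in Theorem \ref{thm:rl-intro}, and what remains is the bookkeeping that $C$ is uniform in $k$.
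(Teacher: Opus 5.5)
Your proposal is correct and is the paper's proof: apply Theorem~\ref{thm:rl-intro} on $\Gamma_k$, use finite spread to get $\Gamma_k\subset\Gamma_0+kS$ and hence $|\Gamma_k|\leqslant C(1+k)^d$, then insert $\norm{\xi_k}\geqslant c\lambda^k$. Your explicit constant $C=|\supp\xi_0|(2R+1)^d$ and your remark on the range $k<k_0$ (read the bound for $k\geqslant k_0$, or shrink $c$) fill in details the paper leaves implicit.
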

\begin{proof}
    By theorem \ref{thm:rl-intro}, 
    \begin{align}
        |\omega(W(\xi_k))| \leqslant \frac{\pi\sqrt{2\nu|\Gamma_k|+1}}{\norm{\xi_k}}.
    \end{align}
    Let $S\subset\mathbb Z^d$ be a finite spread set for $\Phi$. Then
\begin{align}
    \Gamma_k\subset \Gamma_0+kS .
\end{align}
Since $S$ is finite, there is a constant $C>0$, depending only on
$\Gamma_0$ and $S$, such that
\begin{align}
    |\Gamma_k|\leqslant C(1+k)^d .
\end{align}
Therefore,
    \begin{align}
         |\omega(W(\xi_k))| \leqslant \frac{\pi}{c\lambda^k}\sqrt{2\nu C(1+k)^d + 1} \overset{k\rightarrow \infty}{\rightarrow} 0.
    \end{align}
\end{proof}

\section{Hyperbolicity and regularity of GQCA}\label{sec:hyperbolicity_and_regularity}
In this section we introduce local and uniform hyperbolicity, and regularity of GQCAs. We discuss how combinations of these conditions lead to the dynamics thermalizing locally normal states on $\CA_{\mathrm{loc}}$ with finite particle density. 

\subsection{Uniform hyperbolicity}
The complexified Hilbert space $P_{\IC}\coloneqq P\otimes_{\IR}\IC$ is Fourier dual to $L^2(\IT^d,\IC^{2})$. The dual to $a\in P_{\IC}$ is 
\begin{align}
    \widehat a(\theta)
    =\sum_{x\in\IZ^d}a_x e^{i x\cdot \theta},
    \qquad \theta=(\theta_1,\ldots,\theta_d)\in\IT^d\,.
    \label{eq:Fourier_transform_convention}
\end{align}

Then, $\Phi$ acts on $L^2(\IT^d,\IC^{2})$ by multiplication by the matrix-valued symbol
\begin{align}
    A(\theta)
    :=\Phi(e^{i\theta_1},\ldots,e^{i\theta_d})
    =\sum_{x\in \Gamma}\Phi_x e^{i x\cdot\theta}.
    \label{eq:Phi_symbol}
\end{align}

\begin{definition}\label{def:uniform_hyperbolicity}
We call $\Phi\in Sp(R,2)$ \emph{uniformly hyperbolic} if there exists $\delta>0$ such that the spectrum of each fiber $A(\theta)$ is uniformly separated from the unit circle:
\begin{align}
    \sigma(A(\theta))
    \cap
    \{\zeta\in\IC: e^{-\delta}<|\zeta|<e^{\delta}\}
    =\varnothing,
    \qquad \forall\theta\in\IT^d .
    \label{eq:uniform_hyperbolicity}
\end{align}

\end{definition}

The uniform hyperbolicity condition is a direct analog of the condition satisfied by the Anosov diffeomorphisms \cite{anosov1969geodesic}. Below we demonstrate that, similar to Anosov diffeomorphisms splitting the tangent bundle into stable/unstable subbundles, uniformly hyperbolic GQCAs split $P$ into stable/unstable subspaces. 

Assume from now on that $\Phi$ is uniformly hyperbolic.  Choose any radius $\rho$ with $e^{-\delta}<\rho<e^\delta$ where $\delta$ is taken from the preceding definition.  Since no eigenvalue of $A(\theta)$ crosses the circle $|\zeta|=\rho$, the Riesz projection
\begin{align}
    \Pi_-(\theta)
    :=\frac{1}{2\pi i}\oint_{|\zeta|=\rho}
      (\zeta I-A(\theta))^{-1}\,d\zeta
    \label{eq:Riesz_projection_stable}
\end{align}
exists for every $\theta\in\IT^d$ and is independent of the choice of $\rho$ inside the spectral gap.  It projects onto the generalized eigenspaces of $A(\theta)$ with $|\lambda|<1$.  We also set
\[
    \Pi_+(\theta):=I- \Pi_-(\theta),
\]
which projects onto the generalized eigenspaces with $|\lambda|>1$.  The maps $\theta\mapsto  \Pi_\pm(\theta)$ are continuous, and in fact real analytic on $\IT^d$. Indeed, the symbol $A(\theta)$
is real analytic on $\mathbb T^d$ and extends holomorphically to a
complex neighborhood of $\mathbb T^d$, since it is a Laurent polynomial
in $e^{i\theta_1},\ldots,e^{i\theta_d}$, see \cite{kato1995perturbation}, Chapter II.

Note that the projections commute with the symbol:
\begin{align}
     \Pi_\pm(\theta)A(\theta)=A(\theta) \Pi_\pm(\theta).
\end{align}
Therefore they define bounded projections $\Pi_{\pm}$ on $P_\IC$ by
\begin{align}
    (\Pi_\pm f)(\theta):= \Pi_\pm(\theta)f(\theta)\,.
    \label{eq:global_Riesz_projection}
\end{align}
Defining
\[
    P_{\IC,\pm}:=\operatorname{Ran}\Pi_\pm,
\]
we obtain the $A(\theta)$-invariant splitting
\begin{align}
    P_\IC=P_{\IC,-}\oplus P_{\IC,+}.
\end{align}
\begin{lemma}\label{lemma:exp_estimates}
Let $\Phi$ be uniformly hyperbolic and let $\delta$ be the constant from Definition \ref{def:uniform_hyperbolicity}. For every $0<\gamma<\delta$
there is $C_\gamma<\infty$ such that, for all $k\geqslant 0$,
\begin{align}
        \|\Phi^k f_-\|_2&\leqslant C_\gamma e^{-\gamma k}\|f_-\|_2,
        \qquad f_-\in P_{\mathbb C,-},\nonumber\\
        \|\Phi^{-k} f_+\|_2&\leqslant C_\gamma e^{-\gamma k}\|f_+\|_2,
        \qquad f_+\in P_{\mathbb C,+}.
        \label{eq:hyperbolic_estimates}
\end{align}
    
\end{lemma}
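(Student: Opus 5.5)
Via Fourier transform (Plancherel, up to the factor $(2\pi)^{-d/2}$), $\Phi^k$ on $P_{\IC}$ is unitarily equivalent to multiplication by $A(\theta)^k$ on $L^2(\IT^d,\IC^2)$. Since $\Pi_\pm$ commute with the symbol, it suffices to prove the uniform fiberwise bounds
\begin{align}
    \sup_{\theta\in\IT^d}\norm{A(\theta)^k\Pi_-(\theta)}\leqslant C_\gamma e^{-\gamma k},
    \qquad
    \sup_{\theta\in\IT^d}\norm{A(\theta)^{-k}\Pi_+(\theta)}\leqslant C_\gamma e^{-\gamma k}.
\end{align}
The $L^2$ estimates then follow. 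For $f_-\in P_{\IC,-}$ we have $\widehat{f_-}(\theta)=\Pi_-(\theta)\widehat{f_-}(\theta)$ a.e., so $\norm{\Phi^kf_-}_2^2=\int\abs{A(\theta)^k\Pi_-(\theta)\widehat{f_-}(\theta)}^2d\theta/(2\pi)^d\leqslant C_\gamma^2e^{-2\gamma k}\norm{f_-}_2^2$, and similarly for $f_+$.

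For the fiber bound on the stable part, I will use the holomorphic functional calculus with the contour $|\zeta|=e^{-\gamma}$. Since $\gamma<\delta$, this circle lies in the spectral gap, so
\begin{align}
    A(\theta)^k\Pi_-(\theta)=\frac{1}{2\pi i}\oint_{|\zeta|=e^{-\gamma}}\zeta^k(\zeta I-A(\theta))^{-1}\,d\zeta .
\end{align}
Estimating the integrand gives
\begin{align}
    \norm{A(\theta)^k\Pi_-(\theta)}\leqslant e^{-\gamma k}\,e^{-\gamma}\sup_{|\zeta|=e^{-\gamma}}\norm{(\zeta-A(\theta))^{-1}},
\end{align}
so $C_\gamma$ is controlled by a uniform resolvent bound. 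For the unstable part, $A(\theta)$ is invertible (its determinant is $1$ by the symplectic condition \eqref{eq:fiber_symplectic_condition}, and $\Phi^{-1}$ has Laurent-polynomial symbol $A(\theta)^{-1}$). The spectrum of $A(\theta)$ restricted to $\operatorname{Ran}\Pi_+(\theta)$ lies in $|\zeta|\geqslant e^\delta$, so $\Pi_+(\theta)$ is the Riesz projection of $A(\theta)^{-1}$ for the spectrum inside $|\zeta|\leqslant e^{-\delta}$. The same contour argument applied to $A(\theta)^{-1}$ then gives the second bound. Alternatively, one can write $\Pi_+$ as the integral over an annulus boundary $|\zeta|=R$ minus $|\zeta|=e^{\gamma}$ and integrate $\zeta^{-k}$.

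The main obstacle is the uniformity in $\theta$ of the resolvent norm on the contour. The map $(\theta,\zeta)\mapsto(\zeta-A(\theta))^{-1}$ is continuous on the compact set $\IT^d\times\{|\zeta|=e^{-\gamma}\}$, because no eigenvalue of $A(\theta)$ lies on that circle and inversion is continuous on invertible matrices. Its norm is therefore bounded by a finite constant $M_\gamma$, and we may take $C_\gamma=e^{-\gamma}M_\gamma$, with the analogous constant for the unstable part. One subtlety to check is that $\gamma<\delta$ with strict inequality is needed: the circle must avoid the closed spectral annulus boundary, where eigenvalues of modulus exactly $e^{\pm\delta}$ may occur. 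This is also why the rate is any $\gamma<\delta$ rather than $\delta$ itself, accounting for possible Jordan blocks.
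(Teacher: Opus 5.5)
Your proposal is correct and follows the paper's proof essentially step for step. Both arguments use the fiberwise holomorphic functional calculus on the circle $|\zeta|=e^{-\gamma}$, a compactness bound on the resolvent over $\IT^d\times\{|\zeta|=e^{-\gamma}\}$, integration of the pointwise bound over the torus, and the same argument applied to $A(\theta)^{-1}$ on $\operatorname{Ran}\Pi_+(\theta)$.

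One harmless slip: the symplectic condition gives only $\det A(-\theta)\det A(\theta)=1$, i.e.\ $|\det A(\theta)|=1$, not $\det A(\theta)=1$; for instance, a lattice translation has determinant $e^{2ix\cdot\theta}$. This does not affect the argument, since you only use invertibility of $A(\theta)$, which you also get from $\Phi^{-1}$ having a Laurent-polynomial symbol.
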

\begin{proof}
Choose $r_\gamma=e^{-\gamma}$. Since $\gamma<\delta$, the circle
$|z|=r_\gamma$ separates the stable spectrum of $A(\theta)$ from
the unstable spectrum for every $\theta\in\mathbb T^d$. By compactness,
\begin{align}
        M_\gamma
        :=
        \sup_{\theta\in\mathbb T^d,\ |z|=r_\gamma}
        \|(zI-A(\theta))^{-1}\|
        <\infty .
\end{align}
The holomorphic functional calculus gives
\begin{align}
        A(\theta)^k\Pi_-(\theta)
        =
        \frac{1}{2\pi i}\int_{|z|=r_\gamma}
        z^k(zI-A(\theta))^{-1}\,dz .
\end{align}
Hence
\begin{align}
        \sup_{\theta\in\mathbb T^d}
        \|A(\theta)^k\Pi_-(\theta)\|
        \leqslant C_\gamma e^{-\gamma k}.
\end{align}
Integrating the resulting pointwise bound over $\mathbb T^d$ gives
the first estimate. The second estimate follows by applying the same
argument to $A(\theta)^{-1}$ on $\operatorname{Ran}\Pi_+(\theta)$,
whose spectrum is also contained in $\{|z|\leqslant e^{-\delta}\}$.
    
\end{proof}

Equivalently, nonzero vectors in $P_{\IC,+}$ grow exponentially under positive powers of $\Phi$, while nonzero vectors in $P_{\IC,-}$ grow exponentially under negative powers. Also note that the estimates need not hold with $C_\gamma=1$ in the original $\ell^2$ norm.

Having obtained a projection acting on the complex Hilbert space $P_{\IC}$, we can pass to the real phase space.  Since the coefficients of $\Phi$ are real,
\begin{align}
    A(-\theta)=\overline{A(\theta)}.
\end{align}
Consequently,
\begin{align}
    \Pi_\pm(-\theta)=\overline{\Pi_\pm(\theta)}.
\end{align}
Thus $\Pi_\pm$ preserve the real subspace $P\subset P_\IC$, and we get an invariant splitting into real Hilbert subspaces
\begin{align}
    P=P_-\oplus P_+,
    \qquad
    P_\pm:=P\cap P_{\IC,\pm}.
    \label{eq:real_hyperbolic_splitting}
\end{align}

\begin{remark}
 The Riesz projection $\Pi_-$ is generally not a finite-range operator: its Fourier coefficients usually do not have finite support.  The uniform spectral gap implies exponential off-diagonal decay of those coefficients, so the projection is quasi-local rather than strictly local.
\end{remark}

Let
\begin{align}
    \Omega(a,b):=\sum_{n\in\IZ^d} a_n^{\mathsf T}Jb_n,
    \qquad a,b\in P,
    \label{eq:real_symplectic_form_l2}
\end{align}
be the real symplectic form, extended complex bilinearly to $P_\IC$.  In Fourier variables,
\begin{align}
    \Omega(f,g)
    =\int_{\IT^d} f(-\theta)^{\mathsf T}Jg(\theta)\,
      \frac{d\theta}{(2\pi)^d}\,.
    \label{eq:Fourier_symplectic_form}
\end{align}

\begin{proposition}\label{prop:intro_hyperbolic_splitting}
    Let $\Phi$ be a uniformly hyperbolic Gaussian QCA.  Then there is a $\Phi$-invariant splitting
    \begin{align}
        P=P_-\oplus P_+
    \end{align}
    into two Lagrangian subspaces.  Moreover, $\Phi$ is exponentially contracting on $P_-$, and $\Phi^{-1}$ is exponentially contracting on $P_+$.
\end{proposition}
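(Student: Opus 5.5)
Most of the statement is already in hand: the splitting $P=P_-\oplus P_+$ with $P_\pm=P\cap P_{\IC,\pm}$ was constructed in \eqref{eq:real_hyperbolic_splitting}, and the exponential contraction of $\Phi$ on $P_-$ and of $\Phi^{-1}$ on $P_+$ is Lemma~\ref{lemma:exp_estimates} restricted to the real subspaces. Invariance follows because $\Pi_\pm(\theta)$ commutes with $A(\theta)$, so $\Pi_\pm$ commutes with $\Phi$ on $P_\IC$; as $\Phi$ is real, it preserves $P_\pm$. The same holds for $\Phi^{-1}$, so $\Phi P_\pm=P_\pm$. What remains is the Lagrangian property: each $P_\pm$ is isotropic for $\Omega$ and equals its own symplectic complement.

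For isotropy I would use a dynamical argument. Take $a,b\in P_-$. The bilinear form $\Omega$ is bounded on $P$, with $|\Omega(a,b)|\leqslant\norm{a}_2\norm{b}_2$, and it is $\Phi$-invariant on $P$. This invariance holds on $V$ by the definition of a GQCA, and it extends to $P$ by density of $V$ and continuity of both $\Phi$ and $\Omega$. Hence, for every $k$,
\begin{align}
\abs{\Omega(a,b)}=\abs{\Omega(\Phi^k a,\Phi^k b)}\leqslant C_\gamma^2e^{-2\gamma k}\norm{a}_2\norm{b}_2\xrightarrow{k\to\infty}0 .
\end{align}
So $\Omega(a,b)=0$. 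The same argument with $\Phi^{-k}$ shows that $P_+$ is isotropic.

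For maximality, suppose $c\in P$ satisfies $\Omega(c,P_-)=0$. Write $c=c_-+c_+$ with $c_\pm\in P_\pm$. Since $P_-$ is isotropic, $\Omega(c_+,a)=0$ for all $a\in P_-$. Combined with isotropy of $P_+$, this gives $\Omega(c_+,P)=0$. Because $\Omega$ is nondegenerate on $P$ (for instance $\Omega(c_+,J c_+)=\pm\norm{c_+}_2^2$, since $J$ is orthogonal), we conclude $c_+=0$. Hence $P_-^{\Omega}=P_-$, and symmetrically $P_+^{\Omega}=P_+$.

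The only delicate point is the extension of $\Phi$-invariance of $\Omega$ from $V$ to $P$, together with the boundedness of $\Omega$. Both are immediate from Cauchy--Schwarz and the boundedness of $\Phi$ on $P$ established in the earlier lemma. An alternative check in Fourier variables would use the fiber identity \eqref{eq:fiber_symplectic_condition}, which shows that $\Omega$ pairs the $\lambda$-eigenspace at $\theta$ with the $\lambda^{-1}$-eigenspace at $-\theta$. However, the dynamical argument above avoids having to treat generalized eigenspaces separately.
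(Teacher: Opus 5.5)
Your proposal is correct and follows the paper's proof (Lemma~\ref{lem:hyperbolic_lagrangians}). Like the paper, you get isotropy from $\Omega(f,g)=\Omega(\Phi^k f,\Phi^k g)\to 0$ using the estimates of Lemma~\ref{lemma:exp_estimates}, and maximality from $P=P_-\oplus P_+$ together with nondegeneracy of $\Omega$. The only difference is that you write out explicitly the maximality step (and the extension of $\Phi$-invariance of $\Omega$ from $V$ to $P$), which the paper asserts in one line.
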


\begin{lemma}\label{lem:hyperbolic_lagrangians}
The real subspaces $P_-$ and $P_+$ are Lagrangian subspaces of $(P,\Omega)$.
\end{lemma}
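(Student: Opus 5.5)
The plan is to prove isotropy first and then maximality. Isotropy rests on the invariance of $\Omega$ under $\Phi$ combined with the exponential estimates of Lemma~\ref{lemma:exp_estimates}. Maximality then follows from the direct-sum decomposition $P=P_-\oplus P_+$.

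\emph{Isotropy of $P_-$.} Take $a,b\in P_-$. Since $\Phi$ preserves $\sigma$ on $V$ and is bounded on $P$, it also preserves $\Omega$ on $P$. The reason is that $\Omega$ is jointly continuous: $|\Omega(a,b)|\leqslant\|a\|_2\|b\|_2$ by Cauchy--Schwarz. So $\Omega(\Phi^k a,\Phi^k b)=\Omega(a,b)$ extends from the dense subspace $V$ to all of $P$. Because $P_-$ is $\Phi$-invariant, Lemma~\ref{lemma:exp_estimates} gives
\begin{align}
  |\Omega(a,b)|=|\Omega(\Phi^k a,\Phi^k b)|\leqslant C_\gamma^2e^{-2\gamma k}\|a\|_2\|b\|_2\xrightarrow{k\to\infty}0 ,
\end{align}
so $\Omega(a,b)=0$. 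The same argument with $\Phi^{-k}$ shows that $P_+$ is isotropic.

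Alternatively, one can argue fiberwise in Fourier space. Condition \eqref{eq:fiber_symplectic_condition} gives $A(-\theta)^TJA(\theta)=J$. It then follows that $\Pi_-(-\theta)^TJ\,\Pi_-(\theta)=0$, because eigenvalues $\lambda$ of $A(\theta)$ and $\mu$ of $A(-\theta)$ pair nontrivially only when $\lambda\mu=1$. Inserting this into \eqref{eq:Fourier_symplectic_form} recovers the same conclusion. I would give the first, dynamical argument, since it avoids tracking Jordan blocks.

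\emph{Maximality.} Let $a\in P$ satisfy $\Omega(a,P_-)=0$; I must show $a\in P_-$. Decompose $a=a_-+a_+$ along \eqref{eq:real_hyperbolic_splitting}. Since $P_-$ is isotropic, $\Omega(a_-,P_-)=0$, and therefore $\Omega(a_+,P_-)=0$. Combined with isotropy of $P_+$, this gives $\Omega(a_+,P_+)=0$, so $a_+$ is $\Omega$-orthogonal to all of $P=P_-\oplus P_+$. The form $\Omega$ is nondegenerate on $P$: $\Omega(a,b)=\langle a,Jb\rangle_{\ell^2}$ with $J$ acting sitewise as an orthogonal invertible map. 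Hence $a_+=0$ and $a\in P_-$, so $P_-^{\Omega}=P_-$. The identical argument gives $P_+^{\Omega}=P_+$.

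The main obstacle is the bookkeeping between the real and complexified settings. The exponential estimates of Lemma~\ref{lemma:exp_estimates} are stated on $P_{\IC,\pm}$, and I need them on $P_\pm=P\cap P_{\IC,\pm}$. These restrict directly, since the $\ell^2$ norm on $P$ agrees with the norm on $P_\IC$. I also need the decomposition $a=a_-+a_+$ to stay in $P$, which holds because $\Pi_\pm$ preserve $P$, as noted after \eqref{eq:real_hyperbolic_splitting}. Finally, the invariance of $\Omega$ under the bounded extension of $\Phi$ must be justified by density of $V$ in $P$ together with continuity, as above.
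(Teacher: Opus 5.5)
Your proof is correct and follows the paper's approach: isotropy comes from $\Omega$-invariance of $\Phi$ together with the exponential contraction estimates of Lemma~\ref{lemma:exp_estimates}, and maximality comes from the splitting $P=P_-\oplus P_+$ and nondegeneracy of $\Omega$. You also fill in two points the paper leaves implicit: the extension of $\Omega$-invariance from $V$ to $P$ by density and continuity, and an explicit verification that $P_\pm^{\Omega}=P_\pm$, where the paper simply asserts that each summand is maximal isotropic.
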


\begin{proof}
Equation \eqref{eq:fiber_symplectic_condition} implies that $\Phi$ preserves $\Omega$.  If $f,g\in P_-$, then
\[
    \Omega(f,g)=\Omega(\Phi^k f,\Phi^k g)
\]
for every $k\geq0$.  By \eqref{eq:hyperbolic_estimates}, the right-hand side tends to zero as $k\to\infty$, hence $\Omega(f,g)=0$.  Thus $P_-$ is isotropic.  The same argument applied to $\Phi^{-1}$ shows that $P_+$ is isotropic.  Since $P=P_-\oplus P_+$ and $\Omega$ is nondegenerate, each isotropic summand is maximal isotropic, hence Lagrangian.
\end{proof}

In summary, any uniformly hyperbolic GQCA splits $P$ into two lagrangians containing stable and unstable local (as well as non-local) Weyl labels. 

\paragraph{Local hyperbolicity}
\begin{definition}\label{def:local_hyperbolicity}
We call $\Phi\in Sp(R,2)$ \emph{locally hyperbolic} if there exists $\theta_0\in \IT^d$  such that $A(\theta_0)$ is hyperbolic, i.e., there exists $\delta>0$:
\begin{align}
    \sigma(A(\theta_0))
    \cap
    \{\zeta\in\IC: e^{-\delta}<|\zeta|<e^{\delta}\}
    =\varnothing\,.
    \label{eq:local_hyperbolicity}
\end{align}
\end{definition}
Since the hyperbolicity condition, $\det(zI-A(\theta_0))\neq 0$ for all $z\in S^1$, is open, and $A$ is continuous, there is an open neighborhood $U$ of $\theta_0$ such that $A(\theta)$ is hyperbolic for $\theta \in U$. Then, for each $\theta \in U$ we have Riesz spectral projection $\Pi_-(\theta)$ which continuously depends on $\theta$. We denote the corresponding projection on $U$ by $\Pi_{-}^U$. We obtain the same results about the splitting of $P$, now only for a range of frequencies. 

\subsection{Regularity}
In this section we formulate two conditions on GQCAs. Their role is to guarantee that any local Weyl label contains an unstable component such that the appropriate dynamics expands their norm.

 \begin{definition}
 Let $\a$ be a GQCA and $\Phi \in \operatorname{Sp}(R,2)$ the corresponding symplectic transformation. 
 \begin{enumerate}
  \item We say that a GQCA $\Phi$ is not regular if there exist $0\neq\xi \in V$, $c\in \IR^{\times}$, and $x\in \La$ such that 
   \begin{align}
       \a(W(\xi))=e^{i \chi(\xi)}\tau_{x}W(c\xi)\,.
   \end{align}
   Equivalently, 
   \begin{align}
       \Phi\xi=cT_x\xi\,.
   \end{align}
Otherwise, we call $\a$ (and $\Phi$) \textbf{regular}. 
     \item Assume that $\Phi$ is uniformly hyperbolic and let
$P=P_-\oplus P_+$ be the stable/unstable splitting defined by the
Riesz projection. We say that $\Phi$ is \textbf{everyday} if
\begin{align}
        V\cap P_-=\{0\}.
\end{align}
Equivalently, every nonzero strictly local label $\xi\in V$ has a
nonzero unstable component:
\begin{align}
        \Pi_+\xi\neq 0.
\end{align}
  
   \end{enumerate}
\end{definition} 
The regular GQCAs are discussed in \cite{G_tschow_2010} and in \cite{kapustin2026chaosthermalizationcliffordfloquetdynamics}. We are going to show that these two definitions are equivalent as long as we work with the on-site Hilbert space $\CH=L^2(\IR)$. 

We define the field of rational functions $ K:=\IC(u_1,\ldots,u_d)$ and notice that $A$, the symbol of $\Phi$, as a Laurent polynomial matrix, can be viewed as an endomorphism of $K^2$. 
\begin{proposition}\label{prop:fractality_regularity}
    The uniformly hyperbolic GQCA $\a$ (or $\Phi$) is everyday if and only if it is regular. 
\end{proposition}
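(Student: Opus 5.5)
The plan is to prove both implications by contraposition, working algebraically over the Laurent ring $R$ and its fraction field. The bridge between the analytic splitting $P=P_-\oplus P_+$ and algebra is the fact that a strictly local label $\xi\in V$ has a Fourier transform $\widehat\xi(\theta)$ that is a trigonometric polynomial. Hence any polynomial identity in $\xi$ that holds for a.e.\ $\theta\in\IT^d$ holds identically in $R^2$.

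\emph{Not regular $\Rightarrow$ not everyday.} Suppose $0\neq\xi\in V$ satisfies $\Phi\xi=cT_x\xi$, i.e.\ $A\xi=\lambda\xi$ in $R^2$ with $\lambda=c\,u^{x}$, up to the sign convention for translations.

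1. Since $\Phi$ is invertible over $R$, $\det A$ is a unit of $R$, so $\det A=\pm u^{y}$ is a monomial.
2. Over the field $\IR(u_1,\ldots,u_d)$, the characteristic polynomial of the $2\times2$ matrix $A$ has the root $\lambda$. Its other root is $\mu=\det A/\lambda=\pm c^{-1}u^{y-x}$, also a real monomial.
3. Evaluating on the torus, $A(\theta)$ has eigenvalues of moduli $|c|$ and $|c|^{-1}$, so uniform hyperbolicity forces $|c|\neq1$.
4. If $|c|<1$ take $\eta=\xi$. Otherwise let $\eta\in\ker(A-\mu)$ over $\IR(u)$, clear denominators to get $\eta\in R^2=V$ with $\eta\neq0$, and note $\Phi\eta=\pm c^{-1}T_{y-x}\eta$. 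In either case $\|\Phi^k\eta\|_2=|c|^{-k}\|\eta\|_2\to0$ for some $|c|>1$ (after relabeling).
5. If $\Pi_+\eta\neq0$, Lemma~\ref{lemma:exp_estimates} would give $\|\Phi^k\Pi_+\eta\|_2\geqslant C_\gamma^{-1}e^{\gamma k}\|\Pi_+\eta\|_2$, while $\|\Phi^k\Pi_-\eta\|_2\to0$. This contradicts $\|\Phi^k\eta\|_2\to0$.

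So $0\neq\eta\in V\cap P_-$, and $\Phi$ is not everyday.

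\emph{Not everyday $\Rightarrow$ not regular.} Let $0\neq\xi\in V\cap P_-$.

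1. Since $\det A(\theta)=\pm e^{iy\cdot\theta}$ has modulus one and $A(\theta)$ is hyperbolic, $A(\theta)$ has exactly one eigenvalue inside and one outside the unit circle. Hence $\operatorname{Ran}\Pi_-(\theta)$ is one-dimensional and spanned by an eigenvector.
2. $\Pi_-\xi=\xi$ means $\widehat\xi(\theta)\in\operatorname{Ran}\Pi_-(\theta)$ for a.e.\ $\theta$. Therefore the scalar Laurent polynomial $\det\bigl[A\xi\,\big|\,\xi\bigr]$ vanishes a.e.\ on $\IT^d$, hence identically. So $A\xi=\lambda\xi$ for some $\lambda\in\IR(u)$.
3. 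Factor $\xi=g\xi'$ with $g\in R$ and $\xi'\in R^2$ primitive, meaning its two components have gcd $1$; this uses that $R$ is a UFD. Then $A\xi'=\lambda\xi'$. Since $A\xi'\in R^2$ and the components of $\xi'$ are coprime, the denominator of $\lambda$ divides both of them, so $\lambda\in R$.
4. Likewise $A^{-1}\xi'=\lambda^{-1}\xi'$ with $A^{-1}$ Laurent polynomial, so $\lambda^{-1}\in R$. Thus $\lambda$ is a unit, $\lambda=c\,u^{x}$ with $c\in\IR^\times$.
5. Consequently $\Phi\xi'=cT_x\xi'$ with $0\neq\xi'\in V$, so $\Phi$ is not regular.

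\emph{Main obstacle.} I expect the delicate step to be the passage from the analytic condition to the algebraic eigen-relation in the second direction. It needs the one-dimensionality of the stable fiber, which uses that $\det A$ is a monomial and that there is exactly one mode per site. It also needs the primitivity argument showing the eigenvalue is a monomial rather than merely a rational function. All gcd computations must be done over $\IR[u^{\pm1}]$ so that $\xi'$ remains a real label.
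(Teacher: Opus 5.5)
Your proof is correct and follows essentially the same route as the paper. Both argue by contraposition in both directions. Both use the one-dimensional stable fiber to turn $\widehat\xi(\theta)\in\operatorname{Ran}\Pi_-(\theta)$ into an algebraic eigen-relation $A\xi=\lambda\xi$ with $\lambda$ forced to be a monomial unit $c\,u^{x}$. In the case $|c|>1$, both pass to a cleared-denominator eigenvector of the other root $\det A/\lambda$.

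The differences are minor:
\begin{itemize}
\item You get $\lambda\in R$ from a primitive-vector (content) argument, and $\lambda^{-1}\in R$ by applying it to $A^{-1}$. The paper instead uses integrality of a root of the monic characteristic polynomial together with $\mu\nu=\det A$.
\item You work over $\IR(u)$ directly instead of complexifying, so $c\in\IR^\times$ comes for free.
\item You verify membership in $P_-$ dynamically via Lemma~\ref{lemma:exp_estimates}, where the paper simply asserts it.
\end{itemize}

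One small justification to fix: $\det A=\pm u^{y}$, rather than $c'u^{y}$ with arbitrary $c'\in\IR^\times$, follows from the symplectic identity $\det A(u^{-1})\det A(u)=1$, not from invertibility alone. You need this for $|\det A(\theta)|=1$, which underlies both the one-dimensionality of the stable fiber and the bound $|\det A/\lambda|<1$ on $\IT^d$.
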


\begin{proof}
Let $R_{\IC}:=R\otimes_{\IR}\IC\cong\IC[u_1^{\pm1},\ldots,u_d^{\pm1}]$
and identify $V_{\IC}:=V\otimes_{\IR}\IC$ with $R_{\IC}^2$.  Under this
identification $T_x$ is multiplication by the monomial $u^x$.  Since the Riesz
projection defining $P_-$ is real,
\begin{align}
        P_-\cap V=\{0\}
        \quad\Longleftrightarrow\quad
        P_{\IC,-}\cap V_{\IC}=\{0\}.
        \label{eq:regularity_complexification}
\end{align}
We prove the contrapositive in both directions.

First suppose that $\Phi$ is not everyday.  Choose
$0\neq\xi\in P_-\cap V$ and view $\xi$ as an element of $R_{\IC}^2$.  Since
$\CH=L^2(\IR)$, the space $\operatorname{Ran}\Pi_-(\theta)$ is
one-dimensional for every $\theta\in\IT^d$.  Hence, whenever
$\widehat\xi(\theta)\neq0$, the vectors $A(\theta)\widehat\xi(\theta)$ and
$\widehat\xi(\theta)$ are proportional. Thus, in $K^2$,
\begin{align}
        A(u)\hat \xi(u)=\mu(u)\hat \xi(u)\,,\quad \mu\in K\,.
    \end{align}
    
The eigenvalue $\mu(u)$ is a root of characteristic polynomial $\chi_A(z,u)=\det(zI-A(u))=z^2-z\Tr[A(u)]+\Det[A(u)]$. We observe that $\mu(u)\in R_{\IC}$: we substitute $\mu=\frac{f}{g}$ into the characteristic equation and find that $g|f^2$ but $\operatorname{gcd}(f,g)=1$ and $R_{\IC}$ is an UFD, which forces $g$ be a unit. Then, indeed, $\mu(u)$ is a Laurent polynomial. 

The second root of $\chi_A(z,u)$ is $\nu\coloneqq \Tr[A(u)]-\mu(u)$ and $\mu(u)\nu(u)=\Det[A(u)]\in R_{\IC}^{\times}$. Therefore, $\mu(u)$ is invertible and there exist $c\in \IC{^\times}$ and $x\in \La$ such that 
\begin{align}
    \mu(u)=cu^x\,.
\end{align}
 Since $A\hat \xi$ and
$\hat \xi$ have real coefficients, in fact $c\in\IR^\times$.  Consequently,
\begin{align}\label{eq:eigen_equation}
 A(u)\hat \xi(u)=cu^x\hat \xi(u)\,\quad \mbox{and}\quad       \Phi\xi=cT_x\xi,
\end{align}
with $0\neq\xi\in V$, so $\Phi$ is not regular.

Conversely suppose that $\Phi$ is not regular, so that
$0\neq\xi\in V$, $c\in\IR^\times$, and $x\in\La$ satisfy \eqref{eq:eigen_equation}.  Uniform hyperbolicity excludes
$|c|=1$.  If $|c|<1$, then $\xi$ belongs to the stable spectral subspace, so
$0\neq\xi\in P_-\cap V$ and $\Phi$ is not everyday.

It remains to consider $|c|>1$.  Put $\lambda(u)=cu^x$.  Since $A$ is symplectic,  $\det[A(u^{-1})]\det[ A(u)]=1$ and since $A$ is real, $|\det[ A(u)]|=1$ on $\IT^{d}$. Therefore,  the other eigenvalue $
        \nu(u):=\frac{\det A(u)}{\lambda(u)}$
has modulus $|c|^{-1}<1$ on $\IT^d$.  Choose a nonzero eigenvector
$w\in K^2$ for $\nu(u)$ and clear denominators to obtain
$0\neq\eta\in R_{\IC}^2$ with $A(u)\eta=\nu(u)\eta$.  Then
$\eta\in P_{\IC,-}\cap V_{\IC}$.  By
\eqref{eq:regularity_complexification}, $P_-\cap V\neq\{0\}$, so $\Phi$ is not
everyday.  We have shown that non-everydayness and non-regularity are equivalent,
which proves the proposition.
\end{proof}

\begin{remark}
    This proposition would fail if we considered the on-site Hilbert space $L^2(\IR^r)$ with $r>1$. Namely, any uniformly hyperbolic everyday GQCA would be regular but not any uniformly hyperbolic regular GQCA would be everyday. 
\end{remark}

\begin{remark}[Purely on-site GQCAs are not everyday]
Let $\Phi$ be a translation-invariant Gaussian QCAs which is a tensor
product of identical on-site symplectic transformations. Equivalently,
its Laurent-polynomial symbol is constant:
\begin{align}
        A(\theta)=S,\qquad S\in \mathrm{Sp}(2,\mathbb R).
\end{align}
Assume that $\Phi$ is uniformly hyperbolic, so that $S$ has two eigenvalues $\l$ and $1/\l$ with $|\l|<1$. The stable Riesz projection for the
lattice system is then independent of $\theta$, and the global stable subspace is simply
\begin{align}
        P_-=\ell^2(\mathbb Z^d,E_-)
\end{align}
where $E_{-}$ is the eigenspace corresponding to eigenvalue $\l$. We have 
\begin{align}
    V \cap  P_{-} = c_{00}(\IZ^d,E_-)\neq \{0\}\,.
\end{align}

We find that a uniformly hyperbolic GQCAs which is merely a tensor product of
on-site symplectic transformations is never everyday.
\end{remark}

\section{Thermalization}\label{sec:thermalization}

In this section, we assemble the previous results into proofs of the main theorems. 
\begin{lemma}\label{lemma:uni_hyp_reg}
    Let $\Phi$ be uniformly hyperbolic and everyday. Then, $\lim_{k\to \infty}\|\Phi^k\xi\|_2=\infty$ exponentially fast for any nonzero $\xi\in V$.
\end{lemma}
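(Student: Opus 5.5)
We decompose $\xi$ along the invariant splitting and show that the unstable part grows exponentially while the stable part decays.

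Fix $0\neq\xi\in V\subset P$ and write $\xi=\xi_-+\xi_+$ with $\xi_\pm=\Pi_\pm\xi\in P_\pm$, using the splitting \eqref{eq:real_hyperbolic_splitting}. Everydayness gives $\xi\notin P_-$, hence $\xi_+\neq0$. Since $\Pi_\pm$ commute with $\Phi$, we have $\Phi^k\xi=\Phi^k\xi_-+\Phi^k\xi_+$ with $\Phi^k\xi_\pm\in P_\pm$.

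The first step is a lower bound on the unstable part. Fix $0<\gamma<\delta$ and apply the second estimate of Lemma \ref{lemma:exp_estimates} to the vector $f_+=\Phi^k\xi_+\in P_{\IC,+}$. This gives
\begin{align}
\|\xi_+\|_2=\|\Phi^{-k}\Phi^k\xi_+\|_2\leqslant C_\gamma e^{-\gamma k}\|\Phi^k\xi_+\|_2,
\end{align}
so $\|\Phi^k\xi_+\|_2\geqslant C_\gamma^{-1}e^{\gamma k}\|\xi_+\|_2$. For the stable part, the first estimate of the same lemma gives $\|\Phi^k\xi_-\|_2\leqslant C_\gamma e^{-\gamma k}\|\xi_-\|_2$.

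The second step converts these into a bound on the whole vector. The subspaces $P_\pm$ are not orthogonal, so we cannot simply add norms. Instead we use boundedness of the projection, $\|\Pi_+\|\leqslant\sup_\theta\|\Pi_+(\theta)\|=:M<\infty$; the supremum is finite by continuity of $\Pi_+(\theta)$ on the compact torus. Then
\begin{align}
\|\Phi^k\xi_+\|_2=\|\Pi_+\Phi^k\xi\|_2\leqslant M\|\Phi^k\xi\|_2,
\end{align}
and therefore
\begin{align}
\|\Phi^k\xi\|_2\geqslant (MC_\gamma)^{-1}\|\Pi_+\xi\|_2\,e^{\gamma k}\longrightarrow\infty .
\end{align}
This is exponential growth with rate $\gamma$, and the constant $c=(MC_\gamma)^{-1}\|\Pi_+\xi\|_2$ is strictly positive by everydayness. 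That is exactly the input needed in Corollary \ref{cor: gqca bounds to zero}. One could alternatively use the reverse triangle inequality, $\|\Phi^k\xi\|\geqslant\|\Phi^k\xi_+\|-\|\Phi^k\xi_-\|$, which works equally well for large $k$.

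The main subtlety is that Lemma \ref{lemma:exp_estimates} is stated on the complexification $P_{\IC}$, whereas $\xi$ and its components are real. Since $\Pi_\pm$ preserve $P$, the real vectors $\xi_\pm$ lie in $P_{\IC,\pm}$, and the real $\ell^2$ norm coincides with the complex one on $P$. So the estimates apply verbatim, and only the non-orthogonality of the splitting needs the uniform bound on $\Pi_+$ handled above.
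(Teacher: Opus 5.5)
Your proof is correct and follows the paper's argument essentially verbatim. Everydayness forces $\Pi_+\xi\neq0$, the unstable estimate gives $\|\Phi^k\Pi_+\xi\|_2\geqslant C_\gamma^{-1}e^{\gamma k}\|\Pi_+\xi\|_2$, and the bound $\|\Phi^k\Pi_+\xi\|_2=\|\Pi_+\Phi^k\xi\|_2\leqslant\|\Pi_+\|\,\|\Phi^k\xi\|_2$ handles the non-orthogonal splitting; your derivation of the lower bound from the $\Phi^{-k}$ estimate of Lemma~\ref{lemma:exp_estimates} applied to $\Phi^k\xi_+\in P_{\IC,+}$ just spells out a step the paper states without comment.
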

\begin{proof}
    We decompose any $\xi\in V\subset P$ as $\xi=\Pi_{-}\xi+\Pi_{+}\xi$. Suppose that there exists $0\neq \xi\in V$ such that $\Pi_{+}\xi=0$. Then, we have that $V\ni \xi=\Pi_{-}\xi\in P_{-}$ and $P_-\cap V\neq 0$. By negating this, we obtain that if $
        P_{-}\cap V=0$, then for any $0\neq \xi \in V$, we have $\Pi_{+}\xi\neq 0$.

    Consider $\xi\in V$ such that $\Pi_{+}\xi\neq 0$. Since $P_+$ commutes with $\Phi$, we have
\begin{align}
\Phi^k \Pi_{+}\xi\in P_+
\qquad
\text{for all }k\geqslant 0.
\end{align}
By uniform hyperbolicity, for every $0<\gamma<\delta$ there exists
$C_\gamma<\infty$ such that
\begin{align}
\|\Phi^{k}\Pi_{+}\xi\|_2
\geqslant
C_\gamma^{-1} e^{\gamma k}\|\Pi_{+}\xi\|_2\,.
\end{align}

Since the Riesz projection $\Pi_+$ commutes with $\Phi$ and it is a bounded operator,
\begin{align}
\|\Phi^k \Pi_+ \xi\|_2
\leqslant
\|\Pi_+\|\,\|\Phi^k \xi\|_2.
\end{align}
Combining the two inequalities gives
\begin{align}
\|\Phi^k \xi\|_2
\geqslant
\frac{1}{\|\Pi_+\|}\|\Phi^k \Pi_+\xi\|_2
\geqslant
\frac{\|\Pi_+\xi\|_2}{C_\gamma\|\Pi_+\|}
e^{\gamma k}.
\end{align}
\end{proof}
We conclude that when the stable subspace $P_{-}$ of the Hilbert phase space $P$ is maximally non-algebraic, i.e., when $P_{-}$ contains no non-zero finitely supported labels, then $\Phi$ drives the $\ell^2$-norm of any local label to infinity. 

\begin{lemma}\label{lemma:_loc_hyp_fractality}
    Let $\Phi$ be locally hyperbolic and regular.  Then, $\lim_{k\to \infty}\|\Phi^k\xi\|_2=\infty$ exponentially fast for any nonzero $\xi\in V$.
\end{lemma}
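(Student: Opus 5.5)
The plan is to work on the Fourier side and reduce to the algebraic argument of Proposition~\ref{prop:fractality_regularity}. By Definition~\ref{def:local_hyperbolicity} there is a connected open neighborhood $U\ni\theta_0$ on which $A(\theta)$ is hyperbolic. On $U$ the Riesz projections $\Pi_\pm(\theta)$ are defined and real analytic. For $0\neq\xi\in V$, Parseval gives
\begin{align}
\|\Phi^k\xi\|_2^2=\int_{\IT^d}\bigl\|A(\theta)^k\widehat\xi(\theta)\bigr\|^2\frac{d\theta}{(2\pi)^d}.
\end{align}
It therefore suffices to find a positive-measure set $B\subset U$ on which $\|A(\theta)^k\widehat\xi(\theta)\|\geqslant c\,e^{\gamma k}$ uniformly in $\theta\in B$.

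\textbf{Step 1: the unstable component is nonzero on $U$.} I will show that $g(\theta):=\Pi_+(\theta)\widehat\xi(\theta)$ is not identically zero on $U$. Suppose instead that $g\equiv0$ on $U$. Then $\widehat\xi(\theta)\in\operatorname{Ran}\Pi_-(\theta)$ for all $\theta\in U$.

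For one mode per site, $A(\theta)$ is a $2\times2$ hyperbolic matrix. Since $|\det A(\theta)|=1$ on the torus, exactly one eigenvalue lies inside the unit circle, so $\operatorname{Ran}\Pi_-(\theta)$ is one-dimensional. Consequently $A(\theta)\widehat\xi(\theta)$ and $\widehat\xi(\theta)$ are proportional for every $\theta\in U$, and
\begin{align}
D(u):=\det\bigl(A(u)\widehat\xi(u),\,\widehat\xi(u)\bigr)\in R_{\IC}
\end{align}
vanishes on the open set $\{e^{i\theta}:\theta\in U\}$ of the torus. A Laurent polynomial vanishing on an open subset of $\IT^d$ is zero, so $D=0$ in $R_{\IC}$.

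Because $\xi\neq0$, $\widehat\xi\neq0$ in $K^2$, and $D=0$ then gives $A\widehat\xi=\mu\widehat\xi$ for some $\mu\in K$. From here I repeat the argument in the proof of Proposition~\ref{prop:fractality_regularity}:
\begin{enumerate}
\item $\mu$ is a root of the monic characteristic polynomial over the UFD $R_{\IC}$, hence $\mu\in R_{\IC}$.
\item The second root $\nu=\Tr A-\mu$ satisfies $\mu\nu=\det A\in R_{\IC}^\times$, so $\mu$ is a unit, and therefore $\mu=cu^x$.
\item Since both $\xi$ and $A\widehat\xi$ have real coefficients, $c\in\IR^\times$.
\end{enumerate}
This yields $\Phi\xi=cT_x\xi$, contradicting regularity.

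\textbf{Step 2: nonvanishing on a positive-measure compact set.} The function $g$ is real analytic on the connected set $U$ and not identically zero. Its zero set therefore has measure zero, and I can choose a compact $B\subset U$ of positive measure together with $\varepsilon>0$ such that $\|g(\theta)\|\geqslant\varepsilon$ on $B$.

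\textbf{Step 3: uniform exponential expansion on $B$.} I rerun the contour argument of Lemma~\ref{lemma:exp_estimates}, but only over the compact set $B$. By compactness there is a uniform gap $\delta_B>0$, and for each $0<\gamma<\delta_B$ this gives
\begin{align}
\sup_{\theta\in B}\|A(\theta)^{-k}\Pi_+(\theta)\|\leqslant C_\gamma e^{-\gamma k}.
\end{align}
Set $M=\sup_B\|\Pi_+(\theta)\|<\infty$. Since $\Pi_+(\theta)$ commutes with $A(\theta)$, for $\theta\in B$,
\begin{align}
\|A(\theta)^k\widehat\xi(\theta)\|\geqslant M^{-1}\|A(\theta)^k g(\theta)\|\geqslant M^{-1}C_\gamma^{-1}e^{\gamma k}\varepsilon.
\end{align}
Integrating over $B$ gives
\begin{align}
\|\Phi^k\xi\|_2\geqslant \Bigl(\frac{|B|}{(2\pi)^d}\Bigr)^{1/2}\frac{\varepsilon}{MC_\gamma}\,e^{\gamma k},
\end{align}
which is the claimed exponential growth.

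\textbf{Main obstacle.} I expect Step 1 to be the crux, specifically the passage from the local statement on $U$ to a global algebraic one. This rests on two points. First, one-dimensionality of $\operatorname{Ran}\Pi_-$, which is where the single-mode assumption is used. Second, the fact that the determinant polynomial $D$ vanishes identically once it vanishes on an open subset of the torus. Steps 2 and 3 are routine analyticity and compactness arguments.
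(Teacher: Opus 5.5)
Your proposal is correct and follows the paper's proof essentially step for step. You derive a contradiction with regularity from the vanishing of $\det\bigl(A\widehat\xi,\widehat\xi\bigr)$ on an open subset of $\IT^d$, using the unit argument of Proposition~\ref{prop:fractality_regularity}, and then obtain uniform exponential expansion of $\Pi_+\widehat\xi$ on a compact positive-measure set $B\subset U$. Two small refinements: you derive $\dim\operatorname{Ran}\Pi_-(\theta)=1$ from $|\det A(\theta)|=1$, which the paper only asserts, and you use a pointwise lower bound for $g$ on $B$ where the paper uses an $L^2(B)$ bound; neither changes the argument.
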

\begin{proof}
Let $0\neq \xi\in V$.  Choose $\theta_0\in\IT^d$ such that
$A(\theta_0)$ is hyperbolic.  After shrinking the neighborhood $U$ of
$\theta_0$, we may assume that $A(\theta)$ is hyperbolic for all
$\theta\in U$, and we denote the corresponding stable and unstable Riesz
projections by $\Pi_-^U(\theta)$ and
$\Pi_+^U(\theta)=I-\Pi_-^U(\theta)$.

We claim that $\Pi_+^U\widehat\xi$ is not identically zero on $U$.  Indeed, if
$\Pi_+^U\widehat\xi\equiv0$ on $U$, then
$\widehat\xi(\theta)\in\operatorname{Ran}\Pi_-^U(\theta)$ for
$\theta\in U$.  Since the fiber phase space is two-dimensional, this implies
that $A(\theta)\widehat\xi(\theta)$ and $\widehat\xi(\theta)$ are
proportional on $U$.  Hence
\begin{align}  \det\bigl(A(u)\widehat\xi(u),\widehat\xi(u)\bigr)
\end{align}
vanishes on a nonempty open subset of $\IT^d$, and therefore vanishes
identically as a Laurent polynomial.  Thus, over
$K$, we have
\begin{align}
        A(u)\widehat\xi(u)=\mu(u)\widehat\xi(u)
\end{align}
for some $\mu\in K$.  By the same algebraic argument as in the proof
of Proposition~\ref{prop:fractality_regularity}, $\mu$ is a unit of $R_{\IC}$. Hence $\Phi\xi=cT_x\xi$ for some $c\in\IR^\times, x\in\La$ which contradicts regularity. This proves the claim.

Choose a compact set $B\subset U$ of positive measure such that
\begin{align}
        g(\theta):=\Pi_+^U(\theta)\widehat\xi(\theta)
\end{align}
is nonzero in $L^2(B,\IC^2)$.  Since $B$ is compact and the fibers are
hyperbolic on $B$, the hyperbolicity over $B$ is uniform.  Therefore,
arguing as in Lemma~\ref{lemma:exp_estimates}, for every sufficiently small
$\gamma>0$ there exists $C_{B,\gamma}<\infty$ such that
\begin{align}
        \|A^kg\|_{L^2(B)}
        \geqslant C_{B,\gamma}^{-1}e^{\gamma k}\|g\|_{L^2(B)} .
\end{align}
Finally, $\Pi_+^U(\theta)$ commutes with $A(\theta)$, and
$\sup_{\theta\in B}\|\Pi_+^U(\theta)\|<\infty$.  Hence
\begin{align}
        \|\Phi^k\xi\|_2
        &=\|A^k\widehat\xi\|_{L^2(\IT^d)}
        \geqslant \|A^k\widehat\xi\|_{L^2(B)}  \geqslant
        \Bigl(\sup_{\theta\in B}\|\Pi_+^U(\theta)\|\Bigr)^{-1}
        \|\Pi_+^U A^k\widehat\xi\|_{L^2(B)}  \nonumber \\
        &=
        \Bigl(\sup_{\theta\in B}\|\Pi_+^U(\theta)\|\Bigr)^{-1}
        \|A^kg\|_{L^2(B)}  \geqslant c_\xi e^{\gamma k},
\end{align}
with $c_\xi>0$.  Thus $\|\Phi^k\xi\|_2\to\infty$ exponentially fast.
\end{proof}
\begin{proof}[Proof of Theorem \ref{th:main}]
Let $0\ne\xi\in V$, and set
\begin{align}
        \xi_k:=\Phi^k\xi,
        \qquad
        \Gamma_k:=\operatorname{supp}\xi_k.
 \end{align}
The scalar phase in
\begin{align}
        \alpha^k(W(\xi))=e^{i\chi_k(\xi)}W(\xi_k)
 \end{align}
has modulus one, so it suffices to prove
\begin{align}
        \omega(W(\xi_k))\to0.
 \end{align}

By everydayness, $\Pi_+\xi\ne0$. Choose $0<\gamma<\delta$.
Using the unstable estimate,
\begin{align}
        \|\Phi^k\Pi_+\xi\|_2
        \geqslant
        C_\gamma^{-1}e^{\gamma k}\|\Pi_+\xi\|_2.
 \end{align}
Since $\Pi_+\Phi^k\xi=\Phi^k\Pi_+\xi$,
\begin{align}
        \|\xi_k\|_2
        \geqslant
        \frac{\|\Pi_+\xi\|_2}{C_\gamma\|\Pi_+\|}
        e^{\gamma k}
        =:c_\xi e^{\gamma k}.
 \end{align}

Let $S\subset\mathbb Z^d$ be a finite spread set for $\Phi$. Then
\begin{align}
        \Gamma_k\subset \operatorname{supp}\xi+kS,
 \end{align}
so there is $C_\xi<\infty$ such that
\begin{align}
        |\Gamma_k|\leqslant C_\xi(1+k)^d.
 \end{align}
By Theorem \ref{thm:rl-intro} and Corollary \ref{cor: gqca bounds to zero},
\begin{align}
        |\omega(W(\xi_k))|
        \leqslant
        \frac{\pi\sqrt{2\nu|\Gamma_k|+1}}{\|\xi_k\|_2}
        \leqslant
        \frac{\pi\sqrt{2\nu C_\xi(1+k)^d+1}}
             {c_\xi e^{\gamma k}}
        \longrightarrow 0.
 \end{align}
This proves the theorem.    
\end{proof}
\begin{proof}[Proof of Theorem \ref{th:main_local}]
The proof is the same after we use
Lemma \ref{lemma:_loc_hyp_fractality}.
\end{proof}

\subsection{Examples}\label{sec:example}

Let us consider a family of simple one-dimensional examples. Let $d=1$ and $R=\mathbb R[u,u^{-1}]$ and define a family of GQCAs by the following symplectic transformation
\begin{align}
    \Phi_a(u):=
    \begin{pmatrix}
        0 & -1\\
        +1 & t_a
    \end{pmatrix}\,,\quad t_a=a+u+u^{-1}\,.
 \end{align}
We check that $\Phi_a(u^{-1})^T J\Phi_a(u)=J$, 
so $\Phi\in \operatorname{Sp}(R,2)$. 
The eigenvalues of $A(\theta)$ are the roots of the characteristic polynomial $\chi_a(\theta,z)=z^2-z(a+2\cos\theta)+1$. 
The fiber
$A_a(\theta)$ is hyperbolic exactly when 
\begin{align}
    |t_a(\theta)|>2.
    \label{eq:example_fiber_hyp_condition}
\end{align}
Then, 
\begin{align}
    \Phi_a \text{ is uniformly hyperbolic}
    \quad\Longleftrightarrow\quad
    |a|>4.
    \label{eq:example_uniform_hyp_locus}
\end{align}
On the other hand,
\begin{align}
    \Phi_a \text{ is locally hyperbolic}
    \quad\Longleftrightarrow\quad
    a\neq0.
    \label{eq:example_local_hyp_locus}
\end{align}

We now check regularity. The characteristic polynomial is reducible over $K=\IC(u)$ if and only if its discriminant is a
square in $K$.  The discriminant is
\begin{align}
    \Delta_a(u)
    &:=t_a(u)^2-4  \\
    &=u^{-2}\bigl(u^2+(a-2)u+1\bigr)
             \bigl(u^2+(a+2)u+1\bigr).
    \label{eq:example_discriminant}
\end{align}
The factor $u^{-2}$ is a square in $K$.  The two quadratic factors in
\eqref{eq:example_discriminant} are coprime, because their difference is
$4u$ and neither factor vanishes at $u=0$.  Hence their product is a square
if and only if both quadratic factors are squares. Further, one checks
\begin{align}
    \chi_a(z,u) \text{ is reducible over } K
    \quad\Longleftrightarrow\quad
    a=0.
    \label{eq:example_reducibility_locus}
\end{align}
For $a=0$ one has
\begin{align}
    \chi_0(z,u)=(z-u)(z-u^{-1})
\end{align}
and, for instance,
\begin{align}
    \Phi_0(u)\binom{1}{-u}
    =u\binom{1}{-u}\,,
\end{align}
 so $\Phi_0$ is not regular.  Therefore,
\begin{align}
    \Phi_a \text{ is regular}
    \quad\Longleftrightarrow\quad
    a\neq0.
    \label{eq:example_regularity_locus}
\end{align}

By Proposition \ref{prop:fractality_regularity}, a uniformly hyperbolic $\Phi_a$ is everyday iff it is regular.  We conclude that $\Phi_a$ is everyday for every $|a|>4$. In summary, $\Phi_a$ is locally hyperbolic and regular for any $a\neq0$ and globally hyperbolic and everyday for any $a$ such that $|a|>4$.

\section{Discussion}

We have proved thermalization for a class of translation-invariant Gaussian quantum cellular automata acting on bosonic lattice systems in infinite volume. The convergence is formulated on the algebraic local Weyl algebra: for every locally normal initial state with uniformly bounded particle density, the expectation of every nontrivial local Weyl operator converges
to zero under the dynamics.

We carried out the analysis of quantum many-body dynamical systems by transferring the problem to the classical phase space. In \cite{kapustin2026chaosthermalizationcliffordfloquetdynamics} this approach was dubbed QCA/CA (cellular automata) correspondence. In general, even classically, results about the dynamics on non-compact phase spaces are rare. In particular, the methods used in loc. cit. would not apply to our case. Nevertheless, the statement of our thermalization Theorem \ref{th:main_local} is similar to Theorem 2.2 of \cite{kapustin2026chaosthermalizationcliffordfloquetdynamics}, and, in the classical case, to Theorem 3.1 of \cite{kapustin2026thermalizationmanybodyclassicalfloquet}. 

Several natural questions remain. First, the everydayness condition is algebraic and checkable in examples, but a broader characterization of its prevalence in higher-dimensional parameter families of Laurent-polynomial symplectic maps would be desirable. Second, the proof yields no sharp thermalization rate beyond the bound obtained from the hyperbolicity gap and support growth; a tighter estimate would require more detailed information on the fiber dynamics and the initial state's local number distribution. Third, it would be interesting to determine whether everydayness and hyperbolicity are stable under perturbations in the space of dynamics. Finally, the case when $\Phi$ has eigenvalues on the unit circle presents a natural avenue toward understanding localization: in the absence of hyperbolicity, the space of Weyl operators may fragment in ways that obstruct thermalization, potentially giving rise to localized or partially ergodic behavior.

\section*{Acknowledgements}
The authors would like to thank Anton Kapustin for helpful conversations and advice on the subject. Additionally, the authors thank the participants of the Mathematical Physics seminar at UCLA, in particular Sameer Erramilli, Niccol\`o Porciani, Rahul Roy, and Ryan Thorngren, for useful comments. The authors attribute insights on the proof of the quantum many-body Riemann-Lebesgue lemma, especially Proposition \ref{prop: quadratic forms bound}, to GPT-5.4 and GPT-5.5. The authors acknowledge support from the Mani L. Bhaumik Institute for Theoretical Physics.

\bibliography{InvStates}
\end{document}